\newtheorem{thm}{Theorem}
\newtheorem{lma}{Lemma}
\newcommand{\gammath}[1]{\gamma_{\rm th}\left( #1 \right)}
\newcommand{\gammathst}[1]{\gamma_{\rm th}^*\left( #1 \right)}
\newcommand{\bfGamma}{{\bf \Gamma}_{\rm th}}
\newcommand{\Gammast}[1]{\bfGamma^{\rm *} \lb #1 \rb}
\newcommand{\bfP}[1]{{\bf P}_{#1}}
\newcommand{\Pst}[1]{{\bf P}_{#1}^*}
\newcommand{\dgamdlamp}[1]{\frac{\partial \gammathst{#1}}{\partial \lambdapst}}
\newcommand{\dUdlamp}[1]{\frac{\partial U_{#1}^*}{\partial \lambdapst}}
\newcommand{\dSdlamp}[1]{\frac{\partial S_{#1}^*}{\partial \lambdapst}}
\newcommand{\dpdlamp}[1]{\frac{\partial p_{#1}^*}{\partial \lambdapst}}
\newcommand{\dUSpdlamp}[1]{\dUdlamp{#1} - \lambdap \dSdlamp{#1} + \lambdad \dpdlamp{#1}}
\newcommand{\Pist}[1]{P_{#1}^* \lb \gammathst{#1} \rb}
\newcommand{\pdf}{f_{\gamma}}
\newcommand{\ccdf}{\bar{F}_{\gamma}}
\newcommand{\lb}{\left (}
\newcommand{\rb}{\right )}
\newcommand{\script}[1]{{\mathcal {#1}}}
\newcommand{\fzb}{f_{z \vert b} \lb z \vert b_i=1 \rb}
\newcommand{\fzf}{f_{z \vert b} \lb z \vert b_i=0 \rb}
\newcommand{\Ibessel}[1]{I_{#1}^{\rm Bes}}
\newcommand{\Pavg}{P_{\rm avg}}
\newcommand{\Iavg}{I_{\rm avg}}
\newcommand{\rmin}{r_{\rm min}}
\newcommand{\pframe}{P_{\rm frame}}
\newcommand{\Dmax}{\bar{D}_{\rm max}}
\newcommand{\invDmaxline}{1/\Dmax}
\newcommand{\invDmax}{\frac{1}{\Dmax}}
\newcommand{\zb}{z_{\rm b}}
\newcommand{\zf}{z_{\rm f}}
\newcommand{\pmd}{P_{\rm MD}}
\newcommand{\pfa}{P_{\rm FA}}
\newcommand{\pskip}[1]{p_{#1}^{\rm skip}}
\newcommand{\Utwomax}{U_2^{\rm max}}
\newcommand{\ptwomax}{p_2^{\rm max}}
\newcommand{\Ts}{T_{\rm s}}
\newcommand{\tinset}{t \in \{1,...t_f\}}
\newcommand{\lambdad}{\lambda_{\rm D}}
\newcommand{\lambdadst}{\lambda_{\rm D}^*}
\newcommand{\lambdap}{\lambda_{\rm P}}
\newcommand{\lambdapst}{\lambda_{\rm P}^*}
\newcommand{\lambdapmin}{\lambdap^{\rm min}}
\newcommand{\lambdapmax}{\lambdap^{\rm max}}
\newcommand{\lambdadmax}{\lambdad^{\rm max}}
\newcommand{\Pmax}{P_{\rm max}}
\newcommand{\lambdai}{\lambda_{\rm I}}
\newcommand{\lambdaist}{\lambda_{\rm I}^*}
\newcommand{\PS}[1]{P_{#1}\lb \gamma \rb}
\newcommand{\PSst}[1]{P^*_{#1}\lb \gamma \rb}
\newcommand{\bfPS}[1]{{\bf P}_{#1}}
\newcommand{\gammaz}[1]{\gamma_{\rm th} \left(#1,z \right)}
\newcommand{\gammazst}[1]{\gamma_{\rm th}^* \left(#1,z \right)}
\newcommand{\GammaS}[1]{\bfGamma \lb #1,z\rb}
\newcommand{\Gammasst}[1]{\bfGamma^{\rm *} \lb #1,z\rb}
\newcommand{\Usoft}{U}
\newcommand{\Ssoft}{S}
\newcommand{\psoft}{p}
\newcommand{\Isoft}{I}
\newcommand{\Usoftst}{U^*}
\newcommand{\Ssoftst}{S^*}
\newcommand{\psoftst}{p^*}
\newcommand{\Isoftst}{I^*}
\begin{document}
\title{Throughput Optimization in Multi-Channel Cognitive Radios with Hard Deadline Constraints}
\author{Ahmed E. Ewaisha, Cihan Tepedelenlio\u{g}lu\\
{School of Electrical, Computer, and Energy Engineering, Arizona State University, Tempe, USA.}\\
\small{Email:\{ewaisha, cihan\}@asu.edu}\\
%\thanks{}
%\today
}
\maketitle
\let\thefootnote\relax\footnote{Copyright (c) 2015 IEEE. Personal use of this material is permitted. However, permission to use this material for any other purposes must be obtained from the IEEE by sending a request to pubs-permissions@ieee.org.

The work in this paper has been supported by NSF Grant CCF-1117041. Parts of this work have appeared in \cite{Ewaisha_Throughput_Maximization}.
}
%The authors are with the School of Electrical, Computer and Engineering at Arizona State University, Tempe, Arizona.}
\begin{abstract}
In a cognitive radio scenario we consider a single secondary user (SU) accessing a multi-channel system. The SU senses the channels sequentially to detect if a primary user (PU) is occupying the channels, and stops its search to access a channel if it offers a significantly high throughput. The optimal stopping rule and power control problem is considered. The problem is formulated as a SU's throughput-maximization problem under a power, interference and packet delay constraints. We first show the effect of the optimal stopping rule on the packet delay, then solve this optimization problem for both the overlay system where the SU transmits only at the spectrum holes as well as the underlay system where tolerable interference (or tolerable collision probability) is allowed. We provide closed-form expressions for the optimal stopping rule, and show that the optimal power control strategy for this multi-channel problem is a modified water-filling approach. We extend the work to multiple SU scenario and show that when the number of SUs is large the complexity of the solution becomes smaller than that of the single SU case. We discuss the application of this problem in typical networks where packets arrive simultaneously and have the same departure deadline. We further propose an online adaptation policy to the optimal stopping rule that meets the packets' hard-deadline constraint and, at the same time, gives higher throughput than the offline policy. %Motivated by the complexity of implementation of the water-filling strategy, we provide a simpler two-level power control scheme, and show its close-to-optimal performance.
\\
\\
\indent Index terms: Delay Constraint, Optimal Stopping Rule, Water Filling, Stochastic Optimization, Optimal Channel Selection
\end{abstract}

%\textbf{}
%\newpage
%==================================================================================================
\section{Introduction}
Cognitive Radio (CR) systems are emerging wireless communication systems that allow efficient spectrum utilization \cite{Survey_CR_1st_2006_Akyildiz}. This is because of the use of transceivers that are capable of detecting the presence of licensed (primary) users. The secondary users (SU) use the frequency bands dedicated originally for the primary users (PUs), for their own transmission. Once PU's activity is detected on some frequency channel, the SU refrains from any further transmission on this channel. This may result in service disconnection for the SUs, thus degrading the quality of service (QoS). On the other hand, if the SUs have access to other channels, the QoS can be improved if these channels are efficiently utilized.

The problem of multiple channels in CR systems has gained attention in recent works due to the challenges associated with the sensing and access mechanisms in a multichannel CR system. 
%limited transmission capabilities of the over-the-shelf transmitters that prevents simultaneous transmissions over multiple channels. Moreover, selecting the ``best'' channel for transmission is still not simple from the SUs' perspective. This is because of the heterogeneous nature of the channels in terms of availability, fading and interference to PUs. At the same time, these decisions need to be taken as fast as possible due to the dynamic nature of the system. In addition, 
Practical hardware constraints on the SUs' transceivers may prevent them from sensing multiple channels simultaneously to detect the state of these channels (free/busy). This leads the SU to sense the channels sequentially, then decide which channel should be used for transmission \cite{POMDP_Qing_Zhao,Sensing_Order_Poor}. In a time slotted system if sequential channel sensing is employed, the SU senses the channels one at a time and stops sensing when a channel is found free. But due to the independent fading among channels, the SU is allowed to skip a free channel if its quality, measured by its power gain, is low and sense another channel seeking the possibility of a higher future gain. Otherwise, if the gain is high, the SU stops at this free channel to begin transmission. The question of when to stop sensing can be formulated as an optimal stopping rule problem \cite{Sabharwal_NonCR_Multiband, Sensing_Order_Poor, Jia_Multichannel_Tx_Opt_Stop_Rule, Cheng_Simple_Chan_Sensing}. In \cite{Sabharwal_NonCR_Multiband} the authors present the optimal stopping rule for this problem in a non-CR system. The work in \cite{Sensing_Order_Poor} develops an algorithm to find the optimal order by which channels are to be sequentially sensed in a CR scenario, whereas \cite{Jia_Multichannel_Tx_Opt_Stop_Rule} studies the case where the SUs are allowed to transmit on multiple contiguous channels simultaneously. The authors presented the optimal stopping rule for this problem in a non-fading wireless channel. Transmissions on multiple channels simultaneously may be a strong assumption for low-cost transceivers especially when they cannot sense multiple channels simultaneously.

In general, if a perfect sensing mechanism is adopted, the SU will not cause interference to the PU since the former transmits only on spectrum holes (referred to as an overlay system). Nevertheless, if the sensing mechanism is imperfect, or if the SU's system is an underlay one (where the SU uses the channels as long as the interference to the PU is tolerable), the transmitted power needs to be controlled to prevent harmful interference to the PU. References \cite{Pei2013Sensing} and \cite{Adaptive_Rate_Power_CR_Sonia} consider power control and show that the optimal power control strategy is a water-filling approach under some interference constraint imposed on the SU transmitter. Yet, all of the above work studies single channel systems which cannot be extended to multiple channels in a straightforward manner. A multiuser CR system was considered in \cite{Hu_MultiCR_Contention} in a time slotted system. To allocate the frequency channel to one of the SUs, the authors proposed a contention mechanism that does not depend on the SUs' channel gains, thus neglecting the advantage of multiuser diversity. A major challenge in a multichannel system is the sequential nature of the sensing where the SU needs to take a decision to stop and begin transmission, or continue sensing based on the information it has so far. This decision needs to trade-off between waiting for a potentially higher throughput and taking advantage of the current free channel. Moreover, if transmission takes place on a given channel, the SU needs to decide the amount of power transmitted to maximize its throughput given some average interference and average power constraints.

%The formulation of the problem as an optimal stopping rule dictates that the SU might skip channels that are already free. This introduces more delay to the SU's packets since they wait untthe solution is optimal with respect to the throughput without any guarantees on the delay. This means that the SU might experience high delay although the throughput is optimal.

In this work, we model the overlay and underlay scenarios of a multi-channel CR system that are sensed sequentially. The problem is solved for a single SU first then we discuss extensions to a multi-SU scenario. For the single SU case, the problem is formulated as a joint optimal-stopping-rule and power-control problem with the goal of maximizing the SU's throughput subject to average power and average interference constraints. This formulation results in increasing the expected service time of the SU's packets. The expected service time is the average number of time slots that pass while the SU attempts to find a free channel, before successfully transmitting a packet. The increase in the service time is due to skipping free channels, due to their poor gain, hoping to find a future channel of sufficiently high gain. If no channels having a satisfactory gain were found, the SU will not be able to transmit its packet, and will have to wait for longer time to find a satisfactory channel. This increase in service time increases the queuing delay. Thus, we solve the problem subject to a bound on the expected service time which controls the delay (we note that in this paper we use the word delay to refer to the service time). In the multiple SUs case, we show that the solution to the single SU problem can be applied directly to the multi-SU system with a minor modification. We also show that the complexity of the solution decreases when the system has a large number of SUs.

To the best of our knowledge, this is the first work to study the joint power-control and optimal-stopping-rule problem in a multi channel CR system. Our contribution in this work is the formulation of a joint power-control and optimal-stopping-rule problem that also incorporates a delay constraint and present a low complexity solution in the presence of interference/collision constraint from the SU to the PU due to the imperfect sensing mechanism. The preliminary results in \cite{Ewaisha_Throughput_Maximization} consider an overlay framework for single user case while neglecting sensing errors. But in this work, we also study the problem in the underlay scenario where interference is allowed from the secondary transmitter (ST) to the primary receiver (PR) and extend to multiple SU case. We also generalize the solution to the multi-SU case when the number of SUs is large. We discuss the applicability of our formulation in typical delay-constrained scenarios where packets arrive simultaneously and have a same deadline. We show that the proposed algorithm can be used to solve this problem offline, to maximize the throughput and meet the deadline constraint at the same time. Moreover, we propose an online algorithm that gives higher throughput compared to the offline approach while meeting the deadline constraint.
% Although we assume that the sequence by which the channels are sensed is fixed, our optimal closed-form solutions allow the incorporation of the dynamic programming sorting algorithm, proposed in \cite{Sensing_Order_Poor}, to find the optimal sequence.

%
% for the overlay case ignoring the sensing errors. In this paper, we solve it for the underlay scenario as well. We provide closed-form expressions for the optimal-stopping-rule and optimal power-control problems. We also discuss the effect of burst arrivals of packets that have the same hard departure deadline. We show an online update approach for the optimal-power and optimal-stopping-rule problem that has a low complexity and outperforms the offline approach.

The rest of this paper is organized as follows: The overlay system model and the underlying assumptions are presented in Section \ref{Overlay_System_Model}. In Section \ref{Problem_Formulation} the problem is formulated mathematically, the main objective is stated and the solution to the overlay problem is proposed. Then, the underlay system model is discussed in Section \ref{Underlay} and the optimal solution is presented. In Section \ref{Multi_SU} the extension to multiple SUs is discussed.% Analysis of transmitters having a constant power is presented in Section \ref{Constant_Power_Tx}.
 In Section \ref{General_Delay}, the delay constraint is generalized to the case where multiple packets arrive at the same time and have the same deadline. An online adaptation solution is also proposed that maximizes the throughput subject to a delay constraint. Finally, numerical results are shown in Section \ref{Results}, while Section \ref{Conclusion} concludes the paper.

Throughout the sequel, we use bold fonts for vectors and asterisk to denote that $x^*$ is the optimal value of $x$; all logarithms are natural, while the expected value operator is denoted $\mathbb{E}[\cdot]$ and is taken with respect to all the random variables in its argument. Finally, we use $(x)^+ \triangleq  {\rm{max}}(x,0)$ and $\mathbb{R}$ to denote the set of the real numbers.

%==================================================================================================

\section{Overlay System Model}
\label{Overlay_System_Model}
Consider a PU network that has a licensed access to $M$ orthogonal frequency channels. Time is slotted with a time slot duration of $\Ts$ seconds.
%Further, each $t_f$ consecutive time slots form a frame.
The SU's network consists of a single ST (SU and ST will be used interchangeably) attempting to send real-time data to its intended secondary receiver (SR) through one of the channels licensed to the PU. Before a time slot begins, the SU is assumed to have ordered the channels according to some sequence (we note that the method of ordering the channels is outside the scope of this work. The reader is referred to \cite{Sensing_Order_Poor} for further details about channel ordering), labeled $1,...,M$. The set of channels is denoted by $\script{M}=\{1,...,M\}$. Before the SU attempts to transmit its packet over channel $i$, it senses this channel to determine its availability ``state'' which is described by a Bernoulli random variable $b_i$ with parameter $\theta_i$ ($\theta_i$ is called the availability probability of channel $i$). If $b_i=0$ (which happens with probability $\theta_i$), then channel $i$ is free and the SU may transmit over it until the on-going time slot ends. If $b_i=1$, channel $i$ is busy, and the SU proceeds to sense channel $i+1$. Channel availabilities are statistically independent across frequency channels and across time slots.
%Moreover, these availabilities are independent across time slots as well. Thus, the two events that two channels $i$ and $j$ are free at two different time slots $t_1$ and $t_2$, respectively, are independent even if $i=j$.

We assume that the SU has limited capabilities in the sense that no two channels can be sensed simultaneously. This may be the case when considering radios having a single sensing module with a fixed bandwidth, so that it can be tuned to only one frequency channel at a time. The reader is referred to \cite{Spectrum_Sensing_via_Kolmogorov_Smirnov_Test}, \cite{Zou2011_CognitiveRelaySelection} and \cite{Robust_Spectrum_Sensing_With_Crowd_Sensors} for detailed information on advanced spectrum sensing techniques. Therefore, at the beginning of a given time slot, the SU selects a channel, say channel $1$, senses it for $\tau$ seconds ($\tau \ll \Ts/M$), and if it is free, it uses it for its own transmission. Otherwise, the SU skips this channel and senses channel $2$, and so on until it finds a free channel. If all channels are busy (i.e. the PU has transmission activities on all $M$ channels) then this time slot will be considered as ``blocked''. In this case, the SU waits for the following time slot and begins sensing following the same channel sensing sequence. As the sensing duration increases, the transmission phase duration decreases which decreases the throughput. But we cannot arbitrarily decrease the value of $\tau$ since this decreases the reliability of the sensing outcome. This trade-off has been studied extensively in the literature, e.g. \cite{Liang2008_SensingThroughputTradeoff}, \cite{Zou2010_AsymptoticOutageProb}. In this work we study the impact of sequential channel sensing on the throughput rather than the sensing duration on the throughput. Hence we assume that $\tau$ is a fixed parameter and is not optimized over. For details on the trade-off between throughput and sensing duration in this sequential sensing problem the reader is referred to \cite{Ewaisha2011Optimization}.

\begin{figure}
	\centering
		\includegraphics[width=0.8\columnwidth]{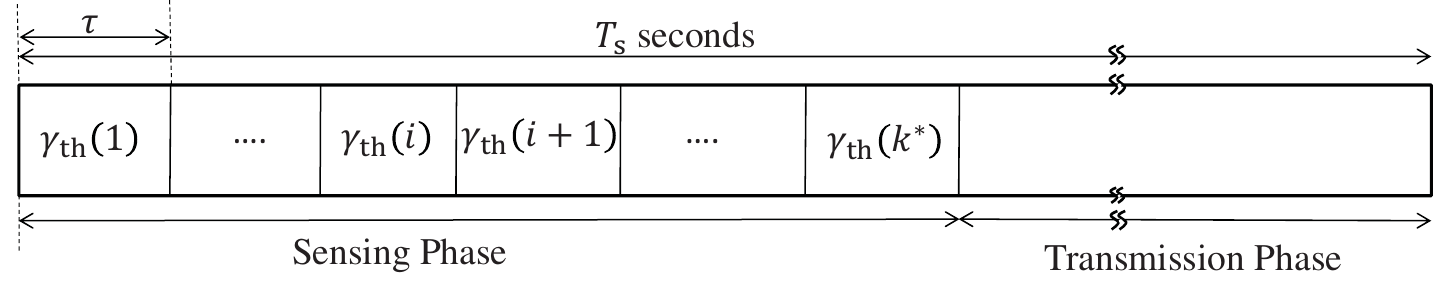}
	\caption{Sensing and transmission phases in one time slot. The SU senses each channel for $\tau$ seconds, determines its state, then probes the gain if the channel is found free. The sensing phase ends if the probed gain $\gamma_i>\gammath{i}$, in which case $k^*=i$. Hence, $k^*$ is a random variable that depends on the channel states and gains.}
	\label{Time_Slot_Fig}
\end{figure}

The fading channel between ST and SR is assumed to be flat fading with independent, identically distributed (i.i.d.) channel gains across the $M$ channels. To achieve higher data rates, the SU adapts its data rate according to the instantaneous power gain of the channel before beginning transmission on this channel. To do this, once the SU finds a free channel, say channel $i$, the gain $\gamma_i$ is probed. The data rate will be proportional to $\log(1+P_i(\gamma_i){\gamma}_i)$, where $P_i(\gamma_i)$ is the power transmitted by the SU at channel $i$ as a function of the instantaneous gain~\cite{Goldsmith_Wireless_Comm}. Fig.~\ref{Time_Slot_Fig} shows a potential scenario where the SU senses $k^*$ channels, skips the first $k^*-1$, and uses the $k^*$th channel for transmission until the end of this on-going time slot. In this scenario the SU ``stops'' at the $k^*$th channel, for some $k^*  \in \script{M}$. Stopping at channel $i$ depends on two factors: 1) the availability of channel $b_i$, and 2) the instantaneous power gain $\gamma_i$. Clearly, $b_i$ and $\gamma_i$  are random variables that change from one time slot to another. Hence, $k^*$, that depends on these two factors, is a random variable. More specifically, it depends on the states $[b_1,...,b_M]$ along with the gains of each channel $[\gamma_1,...,\gamma_M]$. To understand why, consider that the SU senses channel $i$, finds it free and probes its gain $\gamma_i$. If ${\gamma}_i$ is found to be low, then the SU skips channel $i$ (although free) and senses channel $i+1$. This is to take advantage of the possibility that $\gamma_j \gg \gamma_i$ for $j>i$. On the other hand, if $\gamma_i$ is sufficiently large, the SU stops at channel $i$ and begins transmission. In that latter case $k^*=i$. Defining the two random vectors $\underline{b}=[b_1,...,b_M]^T$ and $\underline{\gamma}=[\gamma_1,...,\gamma_M]^T$, $k^*$ is a deterministic function of $\underline{b}$ and $\underline{\gamma}$.

We define the stopping rule by defining a threshold $\gammath{i}$ to which each $\gamma_i$ is compared when the $i$th channel is found free. If $\gamma_i \geq \gammath{i}$, the SU ``stops'' and transmits at channel $i$. Otherwise, channel $i$ is skipped and channel $i+1$ sensed. In the extreme case when $\gammath{i}=0$, the SU will not skip channel $i$ if it is found free. Increasing $\gammath{i}$ allows the SU to skip channel $i$ whenever ${\gamma}_i<\gammath{i}$, to search for a better channel, thus potentially increasing the throughput. Setting $\gammath{i}$ too large allows channel $i$ to be skipped even if $\gamma_i$ is high. This constitutes the trade-off in choosing the thresholds $\gammath{i}$. The optimal values of $\gammath{i}$ $i\in \script{M}$, determine the optimal stopping rule.

%\subsection{Average Power}
Let $P_i(\gamma)$ denote the power transmitted at the $i$th channel when the instantaneous channel gain is $\gamma$, if channel $i$ was chosen for transmission. Since the SU can transmit on one channel at a time, the power transmitted at any time slot at channel $i$ is $P_i(\gamma_i) \mathds{1} \left(i=k^* \right)$, where $\mathds{1} \left(i=k^* \right)=1$ if $i=k^*$ and $0$ otherwise.
%\begin{equation}
%\mathds{1} \left(i=k^* \right)=
%\left \{
%\begin{array}{lll}
%1 & \mbox{if } i=k^*\\
%0 & \mbox{if } i \neq k^*.
%\end{array}
%\right.
%\label{Actual_Power_Tx}
%\end{equation}
Define $c_i \triangleq 1-\frac{i\tau}{\Ts}$ as the fraction of the time slot remaining for the SU's transmission if the SU transmits on the $i$th channel in the sensing sequence. The average power constraint is $\mathbb{E}_{\underline{\gamma},\underline{b}} [c_{k^*} P_{k^*}(\gamma_{k^*})] \leq P_{\rm avg}$, where the expectation is with respect to the random vectors $\underline{\gamma}$ and $\underline{b}$. We will henceforth drop the subscript from the expected value operator ${\mathbb E}$. This expectation can be calculated recursively from
\begin{align}
\nonumber S_i(\bfGamma(i),&\bfP{i})=\theta_i c_i \int_{\gammath{i}}^\infty{P_i(\gamma) f_{\gamma_i}(\gamma) \,d\gamma}+ \\
&\left[1-\theta_i \bar{F}_{\gamma_i}(\gammath{i}) \right]S_{i+1}(\bfGamma(i+1),\bfP{i+1}),
\label{Average_Power}
\end{align}
$i \in \script{M}$, where $\bfP{i} \triangleq [P_i(\gamma),...,P_M(\gamma)]^T$ and $\bfGamma(i) \triangleq [\gammath{i},...,\gammath{M}]^T$ are the vectors of the power functions and thresholds respectively, with $S_{M+1}(\bfGamma(M+1),\bfP{M+1})  \triangleq  0$, $f_{\gamma_i}(\gamma)$ is the Probability Density Function (PDF) of the gain $\gamma_i$ of channel $i$, and $\bar{F}_{\gamma_i}(x) \triangleq \int_x^{\infty}{f_{\gamma_i}(\gamma) \, d\gamma}$ is the complementary cumulative distribution function. The first term in (\ref{Average_Power}) is the average power transmitted at channel $i$ given that channel is chosen for transmission (i.e. given that $k^*=i$). The second term represents the case where channel $i$ is skipped and channel $i+1$ is sensed. It can be shown that $S_1(\bfGamma(1),\bfP{1})=\mathbb{E} \left[c_{k^*} P_{k^*}(\gamma) \right]$. Moreover, we will also drop the index $i$ from the subscript of $f_{\gamma_i}(\gamma)$ and $\bar{F}_{\gamma_i}(\gamma)$ since channels suffer i.i.d. fading. Although we have only included an average power constraint in our problem, we will modify, after solving the problem, the solution to include an instantaneous power constraint as well.

%\subsection{Throughput}
The SU's average throughput is defined as $\mathbb{E} [c_{k^*} \log(1+P_{k^*}(\gamma_{k^*})\gamma_{k^*})]$. Similar to the average power, we denote the expected throughput as $U_1(\bfGamma(1),\bfP{1})$ which can be derived using the following recursive formula
\begin{align}
\nonumber U_i(\bfGamma(i),&\bfP{i})=\theta_i c_i \int_{\gammath{i}}^{\infty}{\log \lb 1+P_i(\gamma)\gamma \rb f_{\gamma}(\gamma)} \, d\gamma +\\
& \left[1-\theta_i \bar{F}_{\gamma}(\gammath{i}) \right] U_{i+1} \lb \bfGamma(i+1),\bfP{i+1} \rb
\label{Reward}
\end{align}
$i\in \script{M}$, with $U_{M+1}(\cdot,\cdot) \triangleq 0$. $U_1(\bfGamma(1),\bfP{1})$ represents the expected data rate of the SU as a function of the threshold vector $\bfGamma(1)$ and the power function vector $\bfP{1}$.%Since we have used the channel capacity formula given by $\log(1+P_i(\gamma))$ to calculate the throughput, then $U_1(\bfGamma(1),\bfP{1})$ gives an upper bound on the SU's throughput that is achieved only if the SU uses a transmission scheme that achieves this capacity \cite{Cover_Info_Theory}.

%\subsection{Delay}
If the SU skips all channels, either due to being busy, due to their low gain or due to a combination of both, then the current time slot is said to be blocked. The SU has to wait for the following time slot to begin searching for a free channel again. This results in a delay in serving (transmitting) the SU's packet. Define the delay $D$ as the number of time slots the SU consumes before successfully transmitting a packet. That is, $D-1$ is a random variable that represents the number of consecutively blocked time slots. In real-time applications, there may exist some average delay requirement $\bar{D}_{\rm{max}}$ on the packets that must not be exceeded. Since the availability of each channel is independent across time slots, $D$ follows a geometric distribution having $\mathbb{E}[D]= \left(\rm{Pr}[\rm{Success}]\right)^{-1}$ where $\rm{Pr}[\rm{Success}]=1-\rm{Pr}[\rm{Blocking}]$. In other words, $\rm{Pr}[\rm{Success}]$ is the probability that the SU finds a free channel with high enough gain so that it does not skip all $M$ channels in a time slot. It is given by $\rm{Pr}[\rm{Success}]  \triangleq  p_1(\bfGamma(1))$ which can be calculated recursively using the following equation
\begin{equation}
\label{Prob_recursive}
p_i(\bfGamma(i))=\theta_i \bar{F}_\gamma(\gammath{i})+ \left[1-\theta_i \bar{F}_\gamma(\gammath{i}) \right]p_{i+1}(\bfGamma(i+1)),
\end{equation}
$i\in \script{M}$, where $p_{M+1} \triangleq 0$. Here, $p_i(\bfGamma(i))$ is the probability of transmission on channel $i$, $i+1$,..., or $M$.

\section{Problem Statement and Proposed Solution}
\label{Problem_Formulation}
From equation (\ref{Reward}) we see that the SU's expected throughput $U_1$ depends on the threshold vector $\bfGamma(1)$ and the power vector $\bfP{1}$. The goal is to find the optimum values of $\bfGamma(1) \in {\mathbb R}^M$ and functions $\bfP{1}$ that maximize $U_1$ subject to an average power constraint and an expected packet delay constraint. The delay constraint can be written as $\mathbb{E}[D] \leq \Dmax$ or, equivalently, $p_1(\bfGamma(1)) \geq \invDmaxline$. Mathematically, the problem becomes

\begin{equation}
\begin{array}{ll}
\rm{maximize}& U_1(\bfGamma(1),\bfP{1})\\
\label{Prob_Opt_Pow_Control}
\rm{subject \; to} &S_1(\bfGamma(1),\bfP{1}) \leq \Pavg\\
& p_1(\bfGamma(1)) \geq \frac{1}{\bar{D}_{\rm{max}}}\\
\rm{variables} & \bfGamma(1),\bfP{1},
\end{array}
\end{equation}
where the first constraint represents the average power constraint, while the second is a bound on the average packet delay. 
%\section{Optimum Solution}
%\label{Optimum_Solution_Overlay}
%We solve this problem optimally for two scenarios. The first is where we allow the SU to adapt the power, transmitted at channel $i$, according to the instantaneous gain $\gamma_i$ through some function $P_i(\gamma_i)$. Thus the variable $\bfP{1} \triangleq [P_1(\gamma),...,P_M(\gamma)]$ is a function of $\gamma$ and a solution to (\ref{Prob_Opt_Pow_Control}) entails solving a variational problem. In the second scenario, the SU is allowed to transmit over channel $i$ with a constant power $P$ that is not a function in the instantaneous gain $\gamma_i$ (i.e. $P_i(\gamma_i)=P$). This means that the SU has two power levels to choose to transmit with: $P_i(\gamma_i)=0$ when channel $i \neq k^*$, and $P_{k^*}(\gamma_{k^*})=P$. For this reason we call this a \emph{two-level} power control. The latter scheme is more useful for transmitters having limited capabilities in controlling the power.
%%%%%%%%%%%%%%%%%%%%%%%%%%%%%%%%%%%%%%%%%%%%%%%%%%%%%%%%%%%%%%%%
%%%%%%%%%%%%%%%%      OPTIMAL POWER CONTROL    %%%%%%%%%%%%%%%%%
%\subsection{Optimum Power Control}
%\label{Optimum_Power_Control_Overlay}
We allow the power $P_i$ to be an arbitrary function of $\gamma_i$ and optimize over this function to maximize the throughput subject to average power and delay constraints. Even though (\ref{Prob_Opt_Pow_Control}) is not proven to be convex, we provide closed-form expressions for the optimal thresholds and power-functions vector. To this end, we first calculate the Lagrangian associated with (\ref{Prob_Opt_Pow_Control}). Let $\lambda_{\rm P}$ and $\lambda_{\rm D}$ be the dual variables associated with the constraints in problem (\ref{Prob_Opt_Pow_Control}). The Lagrangian for (\ref{Prob_Opt_Pow_Control}) becomes
%\footnotesize
\begin{align}
\nonumber L&\left(\bfGamma(1),\bfP{1}, \lambda_{\rm P}, \lambda_{\rm D} \right)=U_1 \left( \bfGamma(1),\bfP{1} \right)-\\
&\lambda_{\rm P} \left( S_1(\bfGamma(1),\bfP{1}) -P_{\rm{avg}}\right)+ \lambda_{\rm D} \left( p_1(\bfGamma(1)) - \frac{1}{\bar{D}_{\rm{max}}} \right).
\label{Lagrange_Optimum_Pow_Control}
\end{align}
%\normalsize
Differentiating (\ref{Lagrange_Optimum_Pow_Control}) with respect to each of the primal variables $P_i(\gamma)$ and $\gammath{i}$ and equating the resulting derivatives to zero, we obtain the KKT equations below which are necessary conditions for optimality \cite{Cvx_Boyd}, \cite{Miersemann_Calc_Var}:
\begin{align}
	\label{Water_Filling}
	&P_i^*(\gamma)=\left(\frac{1}{\lambdapst} - \frac{1}{\gamma}\right)^+, \hspace{0.5cm} \gamma> \gammathst{i},\\
%\end{equation}
%\begin{equation}
	\nonumber &\log \left(1+\left(\frac{1}{\lambdapst}-\frac{1}{\gammathst{i}}\right)^+\gammathst{i} \right) - \lambdapst \left(\frac{1}{\lambdapst} - \frac{1}{\gammathst{i}} \right)^+\\
	&= \frac{U_{i+1}^* - \lambdapst S_{i+1}^*-\lambdadst \cdot \left( 1-p_{i+1}^* \right)}{c_i},
	\label{gamma_i_Equation}\\
\label{Primal_Dual_Feasible}
	&S_1^* \leq P_{\rm{avg}} \hspace{0.2cm}, \hspace{0.2cm} p_1^* \geq \frac{1}{\bar{D}_{\rm{max}}}  \hspace{0.2cm}, \hspace{0.2cm} \lambdapst \geq 0 \hspace{0.2cm}, \hspace{0.2cm} \lambdadst \geq 0,\\
%\end{equation}
%\begin{equation}
	\label{Comp_Slackness_Power}
	&\lambdapst \cdot \left( S_1^* -P_{\rm{avg}}\right)=0,\\
%\end{equation}
%\begin{equation}
	\label{Comp_Slackness_Delay}
	&\lambdadst \cdot \left( p_1^* - \frac{1}{\bar{D}_{\rm{max}}} \right)=0,
\end{align}
$i\in \script{M}$. We use $U_{i+1}^* \triangleq U_{i+1}\left(\Gammast{i+1},\Pst{i+1} \right)$ while $S_{i+1}^* \triangleq S_{i+1}\left(\Gammast{i+1},\Pst{i+1} \right)$ and $p_{i+1}^* \triangleq p_{i+1}\left(\Gammast{i+1}\right)$ for brevity in the sequel. We note that $U_{M+1} \lb \cdot , \cdot \rb = S_{M+1} \lb \cdot , \cdot \rb = p_{M+1} \lb \cdot \rb \triangleq 0$ by definition. 
%with $\Gammast{i+1}  \triangleq  [\gammathst{i+1},...,\gammathst{M}]^T$ and $\Pst{i+1}  \triangleq  [P_{i+1}^*(\gamma),...,P_M^*(\gamma)]^T$
%\subsection{Solving the KKT Equations}
%\label{Solution_Approach}
We observe that these KKT equations involve the primal ($\Gammast{1}$ and $\Pst{1}$) and the dual ($\lambdapst$ and $\lambdadst$) variables. Our approach is to find a closed-form expression for the primal variables in terms of the dual variables, then propose a low-complexity algorithm to obtain the solution for the dual variables. The optimality of this approach is discussed at the end of this section (in Section \ref{Optimality_of_Approach}) where we show that, loosely speaking, the KKT equations provide a unique solution to the primal-dual variables. Hence, based on this unique solution, and on the fact that the KKT equations are necessary conditions for the optimal solution, then this solution is not only necessary but sufficient as well, and hence optimal.

%The primal variables have a closed-form expression derived from the KKT equations. Since this expression is a one-to-one mapping function between the dual variables and the primal variables, then

\subsection{Solving for Primal Variables}
\label{Primal_Variables}
Equation (\ref{Water_Filling}) is a water-filling strategy with a slight modification due to having the condition $\gamma>\gammath{i}$. This condition comes from the sequential sensing of the channels which is absent in the classic water-filling strategy \cite{Goldsmith_Wireless_Comm}. Equation (\ref{Water_Filling}) gives a closed-form solution for $\bfP{1}$. On the other hand, the entries of the vector $\Gammast{1}$ are found via the set of equations (\ref{gamma_i_Equation}). Note that equation (\ref{gamma_i_Equation}) indeed forms a set of $M$ equations, each solves for one of the $\gammathst{i}$, $i\in \script{M}$. We refer to this set as the ``threshold-finding'' equations. For a given value of $i$, solving for $\gammathst{i}$ requires the knowledge of only $\gammathst{i+1}$ through $\gammathst{M}$, and does not require knowing $\gammathst{1}$ through $\gammathst{i-1}$. Thus, these $M$ equations can be solved using back-substitution starting from $\gammathst{M}$. To solve for $\gammathst{i}$, we use the fact that $\gammathst{i} \geq \lambdapst$ that is proven in the following lemma.
\begin{lma}
\label{Lma_Stochastic_Dominance}
The optimal solution of problem (\ref{Prob_Opt_Pow_Control}) satisfies $\gamma_{\rm th}^*(i) \geq \lambdapst$ $\forall i\in \script{M}$.
\end{lma}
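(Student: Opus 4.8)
The plan is to analyze the threshold-finding equation~(\ref{gamma_i_Equation}) and show that its solution cannot fall below $\lambdapst$. The key observation is that the left-hand side of~(\ref{gamma_i_Equation}), viewed as a function of $\gammathst{i}$, degenerates when $\gammathst{i} \le \lambdapst$: in that regime the water-filling term $\left(\frac{1}{\lambdapst} - \frac{1}{\gammathst{i}}\right)^+$ is identically zero, so both $\log\left(1 + \left(\frac{1}{\lambdapst} - \frac{1}{\gammathst{i}}\right)^+\gammathst{i}\right)$ and $\lambdapst\left(\frac{1}{\lambdapst} - \frac{1}{\gammathst{i}}\right)^+$ vanish, making the entire left-hand side equal to $0$ for every $\gammathst{i} \in [0,\lambdapst]$. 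Hence the equation can only be satisfied on this interval if the right-hand side is also zero, and I would argue this is a non-generic boundary case; more importantly, for $\gammathst{i} > \lambdapst$ the left-hand side is strictly positive and strictly increasing, so the ``interesting'' root always lies in $[\lambdapst,\infty)$.

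First I would establish that the right-hand side of~(\ref{gamma_i_Equation}) is nonnegative, i.e. that $U_{i+1}^* - \lambdapst S_{i+1}^* - \lambdadst(1 - p_{i+1}^*) \ge 0$ for all $i$. This I would prove by backward induction on $i$, starting from $i = M$ where the quantity is $0$ by the boundary conventions $U_{M+1}^* = S_{M+1}^* = p_{M+1}^* = 0$. For the inductive step I would use the recursions~(\ref{Average_Power}),~(\ref{Reward}),~(\ref{Prob_recursive}) together with the already-derived optimal power~(\ref{Water_Filling}): substituting $P_i^*(\gamma) = \left(\frac{1}{\lambdapst} - \frac{1}{\gamma}\right)^+$ into the per-channel reward minus $\lambdapst$ times the per-channel power shows that each integrand $\log(1 + P_i^*(\gamma)\gamma) - \lambdapst P_i^*(\gamma) \ge 0$ (since $\log(1+x) \ge x/(1+x)$ and the water-filling power is chosen to be nonnegative), and the $\lambdadst$-term is controlled because $1 - p_{i+1}^* \le 1 - p_{i+2}^*$ along the recursion; combining these with the nonnegative induction hypothesis for index $i+1$ gives nonnegativity at index $i$. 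With the right-hand side shown to be $\ge 0$, and with the left-hand side of~(\ref{gamma_i_Equation}) being $0$ on $[0,\lambdapst]$ and strictly positive and monotone increasing on $(\lambdapst,\infty)$ with range $[0,\infty)$, the solution $\gammathst{i}$ must satisfy $\gammathst{i} \ge \lambdapst$.

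The main obstacle I anticipate is handling the degenerate case cleanly: when the right-hand side of~(\ref{gamma_i_Equation}) equals exactly zero, any $\gammathst{i} \in [0,\lambdapst]$ solves the stationarity condition, so strictly speaking $\gammathst{i} \ge \lambdapst$ need not be forced by~(\ref{gamma_i_Equation}) alone. I would resolve this by appealing to the objective: on $[0,\lambdapst]$ the channel contributes nothing to throughput (zero power, hence zero rate on the event of transmitting there) but still consumes delay budget through the $p_i$ recursion, so among all stationary points the choice $\gammathst{i} = \lambdapst$ (equivalently, never stopping at a channel whose gain would yield zero water-filling power) is the one consistent with optimality — or, alternatively, one simply notes $\lambdapst$ itself is always a valid root and adopts it by convention, which is WLOG for the value of the optimum. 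A secondary, more routine obstacle is verifying the monotonicity and range claims for the left-hand side on $(\lambdapst,\infty)$; this is a one-variable calculus check (differentiate in $\gammathst{i}$, use $\frac{d}{d\gamma}\log(\gamma/\lambdapst) = 1/\gamma$ versus $\frac{d}{d\gamma}\,\lambdapst(1/\lambdapst - 1/\gamma) = \lambdapst/\gamma^2$, and compare), so I would state it without grinding through the algebra.
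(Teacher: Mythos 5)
Your first step (the shape of the left-hand side of~(\ref{gamma_i_Equation}): identically $0$ on $[0,\lambdapst]$, strictly increasing from $0$ to $\infty$ on $(\lambdapst,\infty)$) is correct, but the argument collapses at your second step. The base case of your backward induction is false: by the conventions $U_{M+1}^*=S_{M+1}^*=0$ and $p_{M+1}^*=0$, the quantity at $i=M$ is
\begin{equation*}
U_{M+1}^* - \lambdapst S_{M+1}^* - \lambdadst\left(1-p_{M+1}^*\right) = -\lambdadst,
\end{equation*}
which is strictly negative whenever the delay constraint is active ($\lambdadst>0$), not zero. More generally, writing $Q_{i}\triangleq U_{i}^* - \lambdapst S_{i}^* - \lambdadst(1-p_{i}^*)$, the recursions give $Q_i = [\text{nonnegative term}] + (1-\theta_i\bar F_\gamma(\gammathst{i}))\,Q_{i+1}$ with $Q_{M+1}=-\lambdadst$, so nonnegativity of the right-hand side of~(\ref{gamma_i_Equation}) simply does not hold in general. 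When that right-hand side is strictly negative, equation~(\ref{gamma_i_Equation}) has \emph{no} solution at all (its left-hand side is everywhere $\geq 0$), so no conclusion about $\gammathst{i}$ can be extracted from it; the optimum then sits at the kink $\gammathst{i}=\lambdapst$ of the $(\cdot)^+$, which is exactly why the paper inserts the $(\cdot)^+$ inside the exponential in~(\ref{Gamma_Solution_Lambert_W}) (forcing $W_0(-e^{-1})=-1$ and hence $\gammathst{i}=\lambdapst$). What you dismiss as a ``non-generic boundary case'' is in fact the generic situation for at least channel $M$.

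The part of your proposal that does work is the fallback you sketch at the end, and it is essentially the paper's entire proof: if $\gammathst{i}<\lambdapst$, then on the event $\gamma_i\in[\gammathst{i},\lambdapst)$ the SU stops but transmits zero water-filling power, earning zero rate while forfeiting the continuation value $U_{i+1}^*\geq 0$; replacing $\gammathst{i}$ by $\lambdapst$ therefore adds $\theta_i\bigl(\int_{\gammathst{i}}^{\lambdapst}f_\gamma(\gamma)\,d\gamma\bigr)U_{i+1}^*\geq 0$ to $U_i$, and propagating this through the recursion~(\ref{Reward}) shows $U_1$ does not decrease, contradicting optimality. Note that this exchange argument operates directly on the objective and never invokes the stationarity condition~(\ref{gamma_i_Equation}), so it is immune to the sign issue above. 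You should promote that argument from a corner-case patch to the main (and only) line of proof.
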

\begin{proof}
See appendix \ref{Apdx_Stochastic_Dominance} for the proof.
\end{proof}
The intuition behind Lemma \ref{Lma_Stochastic_Dominance} is as follows. If, for some channel $i$, $\gammathst{i}<\lambdapst$ was possible, and the instantaneous gain $\gamma_i$ happened to fall in the range $[\gammathst{i},\lambdapst)$ at a given time slot, then the SU will not skip channel $i$ since $\gamma_i>\gammathst{i}$. But the power transmitted on channel $i$ is $P_i(\gamma_i)=\left(1/\lambdapst-1/\gamma_i \right)^+=0$ since $\gamma_i<\lambdapst$. This means that the SU will neither skip nor transmit on channel $i$, which does not make sense from the SU's throughput perspective. To overcome this event, the SU needs to set $\gammathst{i}$ at least as large as $\lambdapst$ so that whenever $\gamma_i<\lambdapst$, the SU skips channel $i$ rather than transmitting with zero power.

Lemma \ref{Lma_Stochastic_Dominance} allows us to remove the $( \cdot )^+$ sign in equation (\ref{gamma_i_Equation}) when solving for $\gammathst{i}$. 
%Thus we have
%\begin{equation}
%\log \left(\frac{\gammathst{i}}{\lambdapst} \right) + \frac{\lambdapst}{\gammathst{i}}= \frac{U_{i+1}^* - \lambdapst S_{i+1}^*-\lambdadst \cdot \left( 1-p_{i+1}^* \right)}{c_i} + 1, \hspace{0.1cm} i \in \script{M}
%\label{gamma_i_1}
%\end{equation}
%Multiplying both sides by $-1$, raising the result to the exponent of the natural number $e$ then multiplying by $-1$ again yields
Rewriting (\ref{gamma_i_Equation}) we get
\begin{align}
&\nonumber \frac{-\lambdapst}{\gammathst{i}} \exp \lb \frac{-\lambdapst}{\gammathst{i}} \rb =\\
& -\exp \lb -\frac{U_{i+1}^* - \lambdapst S_{i+1}^*-\lambdadst \cdot \left( 1-p_{i+1}^* \right)}{c_i} - 1 \rb, \hspace{0.1cm} i \in \script{M},
\label{gamma_i_2}
\end{align}
Equation (\ref{gamma_i_2}) is now on the form $W\exp(W)=c$, whose solution is $W=W_0(c)$, where $W_0(x)$ is the principle branch of the Lambert W function \cite{Lambert_W_Function} and is given by $W_0(x)=\sum_{n=1}^{\infty} \frac{\left( -n \right)^{n-1}}{n!}x^n$. The only solution to \eqref{gamma_i_2} which satisfies Lemma \ref{Lma_Stochastic_Dominance} is given for $i\in \script{M}$ by
\begin{equation}
\gammathst{i}=\frac{-\lambdapst}{W_0 \left( -\exp \left(-\frac{\left(U_{i+1}^* - \lambdapst S_{i+1}^*-\lambdadst \left( 1-p_{i+1}^* \right)\right)^+}{c_i}-1\right) \right)}.
\label{Gamma_Solution_Lambert_W}
\end{equation}
%\begin{equation}
%W_0(x)=\sum_{n=1}^{\infty} \frac{\left( -n \right)^{n-1}}{n!}x^n.
%\label{Lambert_W_0}
%\end{equation}
%while the second is given by
%\begin{equation}
%\gammathst{i}=\frac{-\lambdapst}{W_{-1} \left( -\exp \left(-\frac{\left(U_{i+1}^* - \lambdapst S_{i+1}^*-\lambdadst \left( 1-p_{i+1}^* \right)\right)}{c_i}-1\right) \right)},
%\label{Gamma_Solution_Lambert_W__1}
%\end{equation}
%The solution corresponding to the non-principle branch of the Lambert W function results in $\gammathst{i} < \lambdapst$ and therefore contradicts with Lemma \ref{Lma_Stochastic_Dominance}. Thus the optimal solution for the threshold-finding equations is given by (\ref{Gamma_Solution_Lambert_W}).

Hence, $\Gammast{1}$ and $\Pst{1}$ are found via equations (\ref{Gamma_Solution_Lambert_W}) and (\ref{Water_Filling}) respectively which are one-to-one mappings from the dual variables $(\lambdapst,\lambdadst)$. And if we had an instantaneous power constraint $P_i(\gamma)\leq \Pmax$, we could write down the Lagrangian and solve for $P_i(\gamma)$. The details are similar to the case without an instantaneous power constraint and are, thus, omitted for brevity. The reader is referred to \cite{Adaptive_Rate_Power_CR_Sonia} for a similar proof. The expression for $P_i^*(\gamma)$ is given by
	\begin{equation}
P_i^*(\gamma)=\left \{
\begin{array}{lll}
	\left(\frac{1}{\lambda_{\rm P}^*}-\frac{1}{\gamma} \right)^+ & \mbox{if } \frac{1}{\lambda_{\rm P}^*}-\frac{1}{\gamma} < P_{\rm max} \\
	P_{\rm max} & \mbox{otherwise.}
\end{array}
\right.
\end{equation}

Since the optimal primal variables are explicit functions of the optimal dual variables, once the optimal dual variables are found, the optimal primal variables are found and the optimization problem is solved. We now discuss how to solve for these dual variables.

\subsection{Solving for Dual Variables}
\label{Dual_Variables}
The optimum dual variable $\lambdapst$ must satisfy equation (\ref{Comp_Slackness_Power}). Thus if $\lambdapst>0$, then we need $S_1^* -P_{\rm{avg}}=0$. This equation can be solved using any suitable root-finding algorithm. Hence, we propose Algorithm \ref{Alg_lambda_P_lambda_D} that uses bisection \cite{Numerical_Recipes_Ch9}. In each iteration $n$, the algorithm calculates $S_1^*$ given that $\lambdap=\lambdap^{(n)}$, and given some fixed $\lambdad$, compares it to $\Pavg$ to update $\lambdap^{(n+1)}$ accordingly. The algorithm terminates when $S_1^*=\Pavg$, i.e. $\lambdap^{(n)}=\lambdapst$. The superiority of this algorithm over the exhaustive search is due to the use of the bisection algorithm that does not go over all the search space of $\lambdap$. In order for the bisection to converge, there must exist a single solution for equation $S_1^*=\Pavg$. This is proven in Theorem \ref{Thm_Unique_Solution_S}.
\begin{thm}
\label{Thm_Unique_Solution_S}
%When equations (\ref{Water_Filling}), (\ref{gamma_i_Equation}) are satisfied, 
$S_1^*$ is decreasing in $\lambdapst \in [0,\infty)$ given some fixed $\lambdadst \geq 0$. Moreover, the optimal value $\lambdapst$ satisfying $S_1^*=\Pavg$ is upper bounded by $\lambdapmax \triangleq \sum_{i=1}^M \theta_i c_i /\Pavg$.
\end{thm}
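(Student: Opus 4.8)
The plan is to prove the two claims separately, starting with the monotonicity of $S_1^*$ in $\lambdapst$ and then deriving the explicit upper bound on $\lambdapst$. For the monotonicity, the cleanest route is an \emph{envelope-type argument}: since the primal variables $\Pst{1},\Gammast{1}$ are the optimizers for a given $\lambdapst$, small perturbations in $\lambdapst$ affect $S_1^*$ only through the direct dependence. But here $S_1^*$ is not itself the objective, so instead I would argue directly from the structure of the water-filling solution \eqref{Water_Filling}. First I would show the easy building block: for fixed $\gammathst{i}$, the per-channel expected power $\theta_i c_i \int_{\gammath{i}}^\infty P_i^*(\gamma) f_\gamma(\gamma)\,d\gamma = \theta_i c_i \int_{\gammath{i}}^\infty (1/\lambdapst - 1/\gamma)^+ f_\gamma(\gamma)\,d\gamma$ is (weakly) decreasing in $\lambdapst$, because the integrand $(1/\lambdapst-1/\gamma)^+$ is pointwise decreasing in $\lambdapst$. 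Then I would handle the coupling: by Lemma~\ref{Lma_Stochastic_Dominance} we have $\gammathst{i}\geq\lambdapst$, and from \eqref{Gamma_Solution_Lambert_W} the threshold $\gammathst{i}$ moves with $\lambdapst$; I would need to show that the way $\gammathst{i}$ shifts does not reverse the monotonicity. A convenient way to sidestep a delicate direct computation is to use the recursion \eqref{Average_Power} together with \eqref{gamma_i_Equation}: substituting the threshold-finding identity into $S_{i}^*$ lets one express $S_i^*$ recursively in a form where the $\lambdapst$-dependence is transparent, and then induct downward from $i=M+1$ (where $S_{M+1}^*=0$) to $i=1$.

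The main obstacle is exactly this coupling step: $\lambdapst$ enters $S_1^*$ both through the water level $1/\lambdapst$ in each $P_i^*$ and, indirectly, through every threshold $\gammathst{i}$ via \eqref{Gamma_Solution_Lambert_W}, and a priori an increase in $\lambdapst$ could decrease some $\gammathst{i}$, widening the integration region and possibly increasing that channel's contribution. I expect the resolution to hinge on two facts working together: (i) raising $\lambdapst$ lowers the water level, so the integrand shrinks; and (ii) the right-hand side of \eqref{gamma_i_Equation}, namely $(U_{i+1}^* - \lambdapst S_{i+1}^* - \lambdadst(1-p_{i+1}^*))/c_i$, has a sign/monotonicity in $\lambdapst$ that, once combined with the Lambert-$W$ formula, pushes $\gammathst{i}$ in the direction that \emph{reinforces} the decrease of $S_i^*$ rather than opposing it. I would make this rigorous by the downward induction: assume $S_{i+1}^*$ and $(1-p_{i+1}^*)$ behave as needed, show the RHS of \eqref{gamma_i_Equation} is monotone in $\lambdapst$, deduce the monotonicity of $\gammathst{i}$, and then combine with the pointwise-decreasing integrand to conclude $S_i^*$ is decreasing. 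One should also note $S_1^*$ is continuous in $\lambdapst$ (the integrals and the Lambert-$W$ map are continuous), which together with monotonicity is what bisection needs.

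For the upper bound $\lambdapmax = \sum_{i=1}^M \theta_i c_i/\Pavg$, the idea is to produce a crude lower bound on $S_1^*$ that is itself a decreasing function of $\lambdapst$ and equals $\Pavg$ at $\lambdapst = \lambdapmax$; since $S_1^*$ is decreasing, the true root lies below that point. A natural such bound comes from the water-filling formula: for $\gamma > \gammathst{i} \geq \lambdapst$ we have $P_i^*(\gamma) = 1/\lambdapst - 1/\gamma \leq 1/\lambdapst$, but I need a \emph{lower} bound, so instead I would bound $S_1^*$ from below by restricting attention to, say, the event that the SU transmits and using $(1/\lambdapst - 1/\gamma)^+ \geq$ something, or more simply note that each $P_i^*(\gamma)$ is at most $1/\lambdapst$ and each $c_i$ is at most its value, so that $S_1^* \leq \frac{1}{\lambdapst}\sum_{i=1}^M \theta_i c_i \,\Pr[\text{stop at }i]$ — wait, this gives an \emph{upper} bound on $S_1^*$, which is the wrong direction. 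The correct observation is: at the optimum with $\lambdapst>0$ complementary slackness forces $S_1^* = \Pavg$, and expanding the recursion \eqref{Average_Power} gives $\Pavg = S_1^* = \sum_{i=1}^M \theta_i c_i \,\mathbb{P}[k^*\!=\!i]\,\mathbb{E}[P_i^*(\gamma)\mid k^*\!=\!i] \leq \frac{1}{\lambdapst}\sum_{i=1}^M \theta_i c_i \,\mathbb{P}[k^*\!=\!i] \leq \frac{1}{\lambdapst}\sum_{i=1}^M \theta_i c_i$, using $P_i^*(\gamma)\leq 1/\lambdapst$ and $\sum_i \mathbb{P}[k^*\!=\!i] = p_1^* \leq 1$. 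Rearranging yields $\lambdapst \leq \sum_{i=1}^M \theta_i c_i/\Pavg = \lambdapmax$, which is exactly the claimed bound; if $\lambdapst = 0$ the bound is trivial. I would present the monotonicity argument carefully and treat this bound as a short corollary of the water-filling structure plus complementary slackness.
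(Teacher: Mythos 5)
Your argument for the upper bound is correct and essentially identical to the paper's: expand the recursion, use $P_i^*(\gamma)\le 1/\lambdapst$ and the fact that the stopping probabilities sum to at most one, and invoke complementary slackness $S_1^*=\Pavg$. That half needs no changes.

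The monotonicity half, however, has a genuine gap: you correctly identify the obstacle (the coupling of $\lambdapst$ into $S_1^*$ through every threshold $\gammathst{i}$) but you do not resolve it, and the route you sketch --- establish that $\gammathst{i}$ is monotone in $\lambdapst$ and argue that this ``reinforces'' the decrease --- is unlikely to close. There are two problems. First, the sign of $\partial\gammathst{i}/\partial\lambdapst$ is not fixed: differentiating the threshold-finding equation \eqref{gamma_i_Equation} gives
\begin{equation*}
\dgamdlamp{i} \;=\; \frac{c_i \Pist{i}-S_{i+1}^*}{c_i\,\frac{\lambdapst}{\gammathst{i}}\,\Pist{i}},
\end{equation*}
whose numerator $c_i P_i^*(\gammathst{i})-S_{i+1}^*$ can be of either sign depending on the channel index and the parameters, so the thresholds do not all move in one direction. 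Second, even if a given $\gammathst{i}$ did move monotonically, its effect on $S_i^*$ is ambiguous: raising $\gammathst{i}$ shrinks the first integral in \eqref{Average_Power} (good for your claim) but simultaneously increases the continuation weight $1-\theta_i\ccdf(\gammathst{i})$ multiplying the positive quantity $S_{i+1}^*$ (bad for your claim). The paper's resolution is a computation you are missing: differentiate the recursions for $U_i^*$, $S_i^*$ and $p_i^*$ jointly, observe the identity $\partial U_i^*/\partial\lambdapst-\lambdapst\,\partial S_i^*/\partial\lambdapst+\lambdadst\,\partial p_i^*/\partial\lambdapst=0$, use it to obtain the explicit expression for $\dgamdlamp{i}$ above, and substitute back into the recursion for $\dSdlamp{i}$. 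The boundary terms then collapse into $-\alpha_i\bigl(c_i\Pist{i}-S_{i+1}^*\bigr)^2$ with $\alpha_i\ge 0$ --- a perfect square, hence nonpositive \emph{regardless} of which way $\gammathst{i}$ moves --- after which your downward induction from $i=M$ does go through. Without some version of this cancellation, the induction step cannot be completed.
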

\begin{proof}
See Appendix \ref{Apdx_Unique_Solution_S} for the proof.
\end{proof}
We note that Algorithm \ref{Alg_lambda_P_lambda_D} can be systematically modified to call any other root-finding algorithm (e.g. the secant algorithm \cite{Numerical_Recipes_Ch9} that converges faster than the bisection algorithm).

\begin{algorithm}
\caption{Finding $\lambdapst$ given some $\lambdad$}
\begin{algorithmic}[1]
\label{Alg_lambda_P_lambda_D}
\STATE Initialize $n \leftarrow 1$, $\lambdapmin \leftarrow 0$, $\lambdapmax \leftarrow \sum_{i=1}^M \theta_i c_i /\Pavg$, $\lambdap^{(1)} \leftarrow \lb \lambdapmin + \lambdapmax \rb /2$
\WHILE{$\vert S_1^*-P_{\rm avg} \vert > \epsilon$}
\STATE Calculate $S_1^*$ given that $\lambdapst=\lambdap^{(n)}$. Call it $S^{(n)}$.
\IF{$S^{(n)}-\Pavg>0$}
\STATE $\lambdapmin=\lambdap^{(n)}$
\ELSE
\STATE $\lambdapmax=\lambdap^{(n)}$
\ENDIF
\STATE $\lambdap^{(n+1)} \leftarrow \lb \lambdapmin + \lambdapmax \rb /2$
\STATE $n \leftarrow n+1$
\ENDWHILE
\STATE $\lambdapst \leftarrow \lambdap^{(n)}$
\end{algorithmic}
\end{algorithm}

Now, to search for $\lambdadst$, we state the following lemma.

\begin{lma}
\label{Lma_lambdadst_Bound}
The optimum value $\lambdadst$ that solves problem (\ref{Prob_Opt_Pow_Control}) satisfies $0 \leq \lambdadst <\lambdadmax$, where
\begin{equation}
\lambdadmax \triangleq \frac{c_1 \left[ \log \lb t\rb - t +1 \right] + \Utwomax}{1-\ptwomax}
\label{lambdadst_Bound}
\end{equation}
with $t \triangleq \lb\min \lb \lambdapmax,\ccdf^{-1} \lb \frac{1}{\theta_1 \Dmax}\rb\rb \rb /\lb\ccdf^{-1} \lb \frac{1}{\theta_1 \Dmax}\rb\rb$ and $\Utwomax$ is an upper bound on $U_2^*$ and is given by $\lb\int_{\lambdapmax}^\infty \log \lb \gamma/\lambdapmax\rb \pdf(\gamma) \, d\gamma \rb \lb \sum_{i=2}^M \theta_i c_i\rb$, while $\ptwomax$ is an upper bound on $p_2^*$ and is given by $\sum_{i=2}^M \prod_{j=2}^{i-1} \lb 1-\theta_j \rb \theta_i$.
\end{lma}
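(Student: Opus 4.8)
The plan is to read off the value of $\lambdadst$ from the first ($i=1$) instance of the threshold--finding equation~(\ref{gamma_i_Equation}) and then bound every term appearing in it. If $\lambdadst=0$ the claimed inequality $0\le\lambdadst<\lambdadmax$ holds as long as $\lambdadmax>0$, which holds under the standing assumptions, so the substantive case is $\lambdadst>0$. Then complementary slackness~(\ref{Comp_Slackness_Delay}) forces the delay constraint to be active, i.e.\ $p_1^*=\invDmax$, and this is the piece of information that drives everything else.

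The first step is to turn the active delay constraint into a \emph{lower} bound on $\gammathst{1}$. From the recursion~(\ref{Prob_recursive}) at $i=1$, using $p_2^*\ge0$ and $1-\theta_1\ccdf(\gammathst{1})\ge0$,
\[
\invDmax = p_1^* = \theta_1\ccdf\lb\gammathst{1}\rb + \left[1-\theta_1\ccdf\lb\gammathst{1}\rb\right]p_2^* \;\ge\; \theta_1\ccdf\lb\gammathst{1}\rb ,
\]
so $\ccdf(\gammathst{1})\le \frac{1}{\theta_1\Dmax}$ and, $\ccdf$ being non--increasing, $\gammathst{1}\ge\ccdf^{-1}\lb\frac{1}{\theta_1\Dmax}\rb$. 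Next, by Lemma~\ref{Lma_Stochastic_Dominance} we have $\gammathst{1}\ge\lambdapst$, so the $(\cdot)^+$ operators in~(\ref{gamma_i_Equation}) at $i=1$ may be dropped; the left--hand side then equals $\log(\gammathst{1}/\lambdapst)-1+\lambdapst/\gammathst{1}$, and solving for the multiplier gives
\[
\lambdadst \;=\; \frac{U_2^* - \lambdapst S_2^* + c_1\left[\log\lb\lambdapst/\gammathst{1}\rb - \lambdapst/\gammathst{1} + 1\right]}{1-p_2^*}.
\]

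It remains to bound the numerator from above and the denominator from below. Let $\phi(t)\triangleq\log t-t+1$, which is increasing on $(0,1]$. Since $\lambdapst\le\lambdapmax$ by Theorem~\ref{Thm_Unique_Solution_S}, $\gammathst{1}\ge\ccdf^{-1}(\tfrac{1}{\theta_1\Dmax})$, and $\lambdapst/\gammathst{1}\le1$ by Lemma~\ref{Lma_Stochastic_Dominance}, the ratio $\lambdapst/\gammathst{1}$ is at most $\min\lb\lambdapmax,\ccdf^{-1}(\tfrac{1}{\theta_1\Dmax})\rb/\ccdf^{-1}(\tfrac{1}{\theta_1\Dmax})\le1$, hence $c_1\phi(\lambdapst/\gammathst{1})$ is at most $c_1$ times the bracketed expression in~(\ref{lambdadst_Bound}); also $-\lambdapst S_2^*\le0$. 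The bound $U_2^*\le\Utwomax$ is obtained by unrolling~(\ref{Reward}) from $i=2$, bounding each coefficient $1-\theta_i\ccdf(\gammathst{i})$ by $1$, substituting the water--filling form~(\ref{Water_Filling}), using $\gammathst{i}\ge\lambdapst$ to replace $\log(1+P_i^*(\gamma)\gamma)$ by $\log(\gamma/\lambdapst)$ on $[\gammathst{i},\infty)$, and invoking $\lambdapst\le\lambdapmax$. For the denominator, unrolling~(\ref{Prob_recursive}) from $i=2$ telescopes to $p_2^*=1-\prod_{j=2}^M\lb1-\theta_j\ccdf(\gammathst{j})\rb$; since $\ccdf\le1$ we have $1-\theta_j\ccdf(\gammathst{j})\ge1-\theta_j$, so $p_2^*\le1-\prod_{j=2}^M(1-\theta_j)=\ptwomax<1$ and therefore $1-p_2^*\ge1-\ptwomax>0$. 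Substituting these four estimates into the displayed expression for $\lambdadst$ gives $\lambdadst<\lambdadmax$, with strictness coming, for instance, from the slack in $U_2^*\le\Utwomax$.

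The main technical burden is the pair of recursive estimates for $U_2^*$ and $p_2^*$: one must unroll~(\ref{Reward}) and~(\ref{Prob_recursive}), exploit the closed form~(\ref{Water_Filling}) of the optimal power and $\gammathst{i}\ge\lambdapst$ from Lemma~\ref{Lma_Stochastic_Dominance} to dispose of the $(\cdot)^+$ operators, and track the direction of every inequality — in particular the telescoping identity that converts the product form of $p_2^*$ into the stated $\ptwomax$. The conceptually less obvious point, which I expect to be the real obstacle, is recognizing that what is needed is a \emph{lower} bound on $\gammathst{1}$ — because $\phi(\lambdapst/\gammathst{1})$ enters the numerator with a plus sign and $\phi$ increases toward $0$ — and that exactly this lower bound is produced by the active delay constraint via~(\ref{Prob_recursive}).
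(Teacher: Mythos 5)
Your proof follows essentially the same route as the paper's own proof sketch: complementary slackness makes the delay constraint active, which yields $\theta_1\ccdf(\gammathst{1})\le 1/\Dmax$ and hence a lower bound on $\gammathst{1}$; the $i=1$ threshold-finding equation is then solved for $\lambdadst$, and each of $U_2^*$, $\lambdapst S_2^*$, $p_2^*$ and the term $\log(\lambdapst/\gammathst{1})-\lambdapst/\gammathst{1}+1$ is bounded exactly as the paper does, with your monotonicity argument for $t\mapsto\log t-t+1$ on $(0,1]$ (justifying the $\min$) and the telescoping identity behind $\ptwomax$ spelled out more carefully than in the paper. The one soft spot --- that $\int_{\lambda}^{\infty}\log(\gamma/\lambda)\pdf(\gamma)\,d\gamma$ is \emph{decreasing} in $\lambda$, so ``invoking $\lambdapst\le\lambdapmax$'' does not by itself yield $U_2^*\le\Utwomax$ --- is inherited verbatim from the paper, which likewise only asserts $U_2^*<\Utwomax$ without proof.
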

\begin{proof}
See Appendix \ref{Apdx_Lma_lambdadst_Bound}.
\end{proof}
Lemma \ref{Lma_lambdadst_Bound} gives an upper bound on $\lambdadst$. This bound decreases the search space of $\lambdadst$ drastically instead of searching over $\mathbb{R}$. Thus the solution of problem (\ref{Prob_Opt_Pow_Control}) can be summarized on 3 steps: 1) Fix $\lambdadst \in [0,\lambdadmax)$ and find the corresponding optimum $\lambdapst$ using Algorithm \ref{Alg_lambda_P_lambda_D}. 2) Substitute the pair $(\lambdapst,\lambdadst)$ in equations (\ref{Water_Filling}) and (\ref{Gamma_Solution_Lambert_W}) to get the power and threshold functions, then evaluate $U_1^*$ from (\ref{Reward}). 3) Repeat steps 1 and 2 for other values of $\lambdadst$ until reaching the optimum $\lambdadst$ that satisfies $p_1^*=\invDmaxline$. If there are multiple $\lambdadst$'s satisfying $p_1^*=\invDmaxline$, then the optimum one is the one that gives the highest $U_1^*$.

Although the order by which the channels are sensed is assumed fixed, the proposed algorithm can be modified to optimize over the sensing order by a relatively low complexity sorting algorithm. Particularly, the dynamic programming proposed in \cite{Sensing_Order_Poor} can be called by Algorithm \ref{Alg_lambda_P_lambda_D} to order the channels. The complexity of the sorting algorithm alone is $O(2^M)$ compared to the $O(M!)$ of the exhaustive search to sort the $M$ channels. The modification to our proposed algorithm would be in step 3 of Algorithm \ref{Alg_lambda_P_lambda_D}, where $S_1^*$ would be optimized over the number of channels (as well as $\Gammast{1}$).% This will yield, at the end of Algorithm \ref{Alg_lambda_P_lambda_D} the optimum sensing order along with the optimum $\Gammast{1}$ (given $\lambdad$).

\subsection{Optimality of the Proposed Solution}
\label{Optimality_of_Approach}
Since the problem in (\ref{Prob_Opt_Pow_Control}) is not proven to be convex, the KKT conditions provide only necessary conditions for optimality and need not be sufficient \cite{LinNonlinProg_Luenberger}. This means that there might exist multiple solutions (i.e. multiple solutions for the primal and/or dual variables) satisfying the KKT conditions, at least one of which is optimal. But since Theorem \ref{Thm_Unique_Solution_S} proves that there exists one unique solution to $\lambdapst$ given $\lambdadst$, then $\Gammast{1}$ and $\Pst{1}$ are unique as well (from equations (\ref{Water_Filling}) and (\ref{Gamma_Solution_Lambert_W})) given some $\lambdadst$. Hence, by sweeping $\lambdadst$ over $[0,\lambdadmax)$, we have a unique solution satisfying the KKT conditions, which means that the KKT conditions are sufficient as well and our approach is optimal for problem (\ref{Prob_Opt_Pow_Control}).

\section{Underlay System}
\label{Underlay}
In the overlay system, the SU tries to locate the free channels at each time slot to access these spectrum holes without interfering with the PUs. Recently, the FCC has allowed the SUs to interfere with the PU's network as long as this interference does not harm the PUs \cite{FCC_Interference_Threshold}. If the interference from the SU measured at the PU's receiver is below the tolerable level, then the interference is deemed acceptable.

In order to model the interference at the PR, we assume that the SU uses a channel sensing technique that produces the sufficient statistic $z_i$ at channel $i$ \cite{Spectrum_Sensing_Survey_Poor, Spectrum_Sensing_Survey}. The SU is assumed to know the distribution of $z_i$ given channel $i$ is free and busy, namely $f_{z \vert b} \lb z_i \vert b_i=0 \rb$ and $f_{z \vert b} \lb z_i \vert b_i=1 \rb$ respectively. For brevity, we omit the subscript $i$ from $b_i$ whenever it is clear from the context.
The value of $z_i$ indicates how confident the SU is in the presence of the PU at channel $i$. Thus the SU stops at channel $i$ according to how likely busy it is and how much data rate it will gain from this channel (i.e. according to $z_i$ and $\gamma_i$ respectively). Hence, when the SU senses channel $i$ to acquire $z_i$, the channel gain $\gamma_i$ is probed and compared to some function $\gamma_{\rm th}(i,z_i)$; if $\gamma_i\geq\gamma_{\rm th}(i,z_i)$ transmission occurs on channel $i$, otherwise, channel $i$ is skipped and $i+1$ is sensed. Potentially, $\gamma_{\rm th}(i,z_i)$ is a function in the statistic $z_i$. This means that, at channel $i$, for each possible value that $z_i$ might take, there is a corresponding threshold $\gammaz{i}$. Before formulating the problem we note that this model can capture the overlay with sensing errors model as a special case where $\fzb=(1-\pmd)\delta(z-\zb)+\pmd\delta(z-\zf)$ while $\fzf=\pfa\delta(z-\zb)+(1-\pfa)\delta(z-\zf)$, where $\pmd$ and $\pfa$ are the probabilities of missed-detection and false-alarm respectively, while $\delta(z)$ is the Dirac delta function, and $\zb$ and $\zf$ that represent the values that $z$ takes when the channel is busy and free, respectively. Hence, the interference constraint, which will soon be described, can be modified to a detection probability constraint and/or a false alarm probability constraint.% for both the optimum power control, and two-level power control scenarios.

The SU's expected throughput is given by $\Usoft_1(\GammaS{1},\bfPS{1})$ which can be calculated recursively from
\begin{equation}
\begin{array}{ll}
\Usoft_i&(\GammaS{i},\bfPS{i})=\\
&c_i \int_{-\infty}^{\infty}{\int_{\gammaz{i}}^{\infty}{\log(1+\PS{i}\gamma) f_{\gamma}(\gamma)} \, d\gamma f_z(z)} \, dz +
\\ &\pskip{i} \Usoft_{i+1}(\GammaS{i+1},\bfPS{i+1}), \hspace{0.2in} i\in \script{M},
\label{Reward_Soft}
\end{array}
\end{equation}
where $\Usoft_{M+1}(\GammaS{M+1},\bfPS{M+1}) \triangleq 0$, $\GammaS{i}\triangleq [\gammaz{i},...,\gammaz{M}]^T$, $f_z(z) \triangleq \theta_i \fzf + (1-\theta_i) \fzb$ is the PDF of the random variable $z_i$ and $\pskip{i} \triangleq \int_{-\infty}^{\infty}{\int_0^{\gammaz{i}}{f_{\gamma}(\gamma)} \, d\gamma f_z(z)} \, dz$. The first term in (\ref{Reward_Soft}) is the SU's throughput at channel $i$ averaged over all realizations of $z_i$ and that of $\gamma_i \geq \gammaz{i}$. The second term is the average throughput when the SU skips channel $i$ due to its low gain. Also, let the average interference from the SU's transmitter to the PU's receiver, aggregated over all $M$ channels, be $\Isoft_1(\GammaS{1},\bfPS{1})$. This represents the total interference affecting the PU's network due to the existence of the SU. The SU is responsible for guaranteeing that this interference does not exceed a threshold $\Iavg$ dictated by the PU's network. $\Isoft_1(\GammaS{1},\bfPS{1})$ can be derived using the following recursive formula
\begin{equation}
\begin{array}{ll}
&\Isoft_i(\GammaS{i},\bfPS{i})=\\
&\left( 1- \theta_i \right) c_i \int_{-\infty}^{\infty}{\int_{\gammaz{i}}^{\infty}{\PS{i} f_{\gamma}(\gamma)} \, d\gamma f_{z \vert b} \lb z \vert b_i=1 \rb} \, dz \\
&+ \pskip{i}\Isoft_{i+1}(\GammaS{i+1},\bfPS{i+1}), \hspace{0.3in} i\in \script{M},
\label{Interference_Soft}
\end{array}
\end{equation}
where $\Isoft_{M+1}(\GammaS{M+1},\bfPS{M+1}) \triangleq 0$. This interference model is based on the assumption that the channel gain from the SU's transmitter to the PU's receiver is known at the SU's transmitter. This is the case for reciprocal channels when the PR acts as a transmitter and transmits training data to its intended primary transmitter (when it is acting as a receiver)~\cite{Wireless_Comm_Tse}. The ST overhears this training data and estimates the channel from itself to the PR. Moreover, the gain at each channel from the ST to the PR is assumed unity for presentation simplicity. This could be extended easily to the case of non-unity-gain by multiplying the first term in (\ref{Interference_Soft}) by the gain from the ST to the PR at channel $i$. Finally, $\psoft_1(\GammaS{1})$ is the probability of a successful transmission in the current time slot and can be calculated using
\begin{equation}
\begin{array}{ll}
	\psoft_i(\GammaS{i})=&\int_{-\infty}^{\infty}{\int_{\gammaz{i}}^{\infty}{f_{\gamma}(\gamma)} \, d\gamma f_z(z)} \, dz +\\
	&\pskip{i}\psoft_{i+1}(\GammaS{i+1}),
\label{Prob_recursive_Soft}
\end{array}
\end{equation}
$i\in \script{M}$, $\psoft_{M+1}(\GammaS{M+1}) \triangleq 0$. Given this background, the problem is
\begin{equation}
\begin{array}{ll}
\rm{maximize}& \Usoft_1(\GammaS{1},\bfPS{1})\\
\label{Prob_Opt_Pow_Control_Soft}
\rm{subject \; to} &\Isoft_1(\GammaS{1},\bfPS{1}) \leq \Iavg\\
&\psoft_1(\GammaS{1}) \geq \invDmax\\
\rm{variables} & \GammaS{1},\bfPS{1},
\end{array}
\end{equation}
Let $\lambdai$ and $\lambdad$ be the Lagrange multipliers associated with the interference and delay constraints of problem (\ref{Prob_Opt_Pow_Control_Soft}), respectively. %Thus, the Lagrangian associated with problem (\ref{Prob_Opt_Pow_Control_Soft}) is
%\begin{equation}
%\begin{array}{ll}
%&L\left(\GammaS{1},\bfPS{1}, \lambdai, \lambdad \right)=\\
%& \indent \Usoft_1 \left( \GammaS{1},\bfPS{1} \right)- \lambdai \left[ \Isoft_1(\GammaS{1},\bfPS{1}) -\Iavg\right] + \lambdad \left[\psoft_1(\GammaS{1}) - \invDmax \right].
%\label{Lagrange_Soft}
%\end{array}
%\end{equation}
%\footnotesize
%\begin{equation}
%\begin{array}{ll}
%L\left(\GammaS{1},\bfPS{1}, \lambdai, \lambdad \right)&=\Usoft_1 \left( \GammaS{1},\bfPS{1} \right)- \lambdai \left[ \Isoft_1(\GammaS{1},\bfPS{1}) -\Iavg\right] + \lambdad \left[\psoft_1(\GammaS{1}) - \invDmax \right].
%\label{Lagrange_Soft}
%\end{array}
%\end{equation}
%\normalsize
%\subsection{Optimum Power Control}
%\label{Opt_Pow_Soft}
Problem (\ref{Prob_Opt_Pow_Control_Soft}) is more challenging compared to the overlay case. This is because, unlike in (\ref{Prob_Opt_Pow_Control}), the thresholds in (\ref{Prob_Opt_Pow_Control_Soft}) are functions rather than constants. The KKT conditions for \eqref{Prob_Opt_Pow_Control_Soft} are given by
\begin{align}
\label{Water_Filling_Soft}
	&\PSst{i}=\left( \frac{1}{\lambdaist {\rm Pr}\left[b_i=1 \vert z \right]} - \frac{1}{\gamma}\right)^+ \hspace{0.5cm},\hspace{0.5cm} i\in \script{M}.\\
	\label{gammaS_i}
\nonumber &\gammazst{i}=\\
&\frac{-\lambdaist {\rm Pr}\left[b_i=1 \vert z \right]}{W_0 \left( -\exp \left(-\frac{\left(\Usoftst_{i+1} - \lambdaist \Isoftst_{i+1} - \lambdadst \left(1-\psoftst_{i+1} \right) \right)^+}{c_i}-1\right) \right)}, \hspace{0.25cm} i\in \script{M},
\end{align}
%\begin{align}
%\label{Power_KKT_Soft}
%&\int_{-\infty}^{\infty}{\int_{\gammaz{i}}^{\infty}{\left(\frac{\gamma}{1+\PSst{i}\gamma} -\lambdaist {\rm Pr}\left[b_i=1 \vert z \right] \right) f_{\gamma}(\gamma)} \, d\gamma f_z(z)} \, dz=0, \hspace{0.25cm} i\in \script{M}, \\
%%
%\label{gamma_i_Equation_Soft}
%	&\log \left(\frac{\gammazst{i}}{\lambdaist} \right) + \frac{\lambdaist}{\gammazst{i}}=\frac{\left(\Usoftst_{i+1} - \lambdaist \Isoftst_{i+1} - \lambdadst \left(1-\psoftst_{i+1} \right) \right)^+}{c_i} + 1, \hspace{0.25cm} i\in \script{M},
%\end{align}
in addition to the primal feasibility, dual feasibility and the complementary slackness equations given in \eqref{Primal_Dual_Feasible}, \eqref{Comp_Slackness_Power} and \eqref{Comp_Slackness_Delay}, where $\Usoftst_{i+1} \triangleq \Usoft_1  \left( \Gammasst{1},\PSst{1} \right)$, $\Isoftst_{i+1} \triangleq \Isoft_1  \left( \Gammasst{1},\PSst{1} \right)$ and $\psoftst_{i+1} \triangleq \psoft_1  \lb \Gammasst{1}\rb$ while ${\rm Pr}\left[b_i=1 \vert z \right]$ is the conditional probability that channel $i$ is busy given $z_i$ and is given by
\begin{equation}
\label{Cond_Prob_i_given_z_i}
{\rm Pr}\left[b_i=1 \vert z \right]=\frac{\left( 1-\theta_i \right) f_{z \vert b} \lb z \vert b_i=1 \rb}{f_z \left( z \right)}.
\end{equation}
Note that $\PSst{i}$ is increasing in $\gamma$ and is upper bounded by the term $1 / \lb \lambdaist {\rm Pr}\left[b_i=1 \vert z \right] \rb$. Hence, as ${\rm Pr}\left[b_i=1 \vert z \right]$ increases, the SU's maximum power becomes more limited, i.e. the maximum power decreases. This is because the PU is more likely to be occupying channel $i$. Thus the power transmitted from the SU should decrease in order to protect the PU.

%From \cite{Lambert_W_Function} we know that $-1 \leq W_0(x) \leq 0$ when $-\exp(-1) \leq x \leq 0$. And since the argument of $W_0$ in (\ref{gammaS_i}) is in the interval $[-\exp(-1),0]$, then the denominator of (\ref{gammaS_i}) is always negative. Hence, when ${\rm Pr}\left[b_i=1 \vert z \right]$ increases, $\gammazst{i}$ increases. This is because the PU is more likely to be occupying channel $i$. Therefore $\gammazst{i}$ increases so that the SU skips the channel more frequently to protect the PU from interference at channel $i$.

Algorithm \ref{Alg_lambda_P_lambda_D} can also be used to find $\lambdaist$. Only a single modification is required in the algorithm which is that $S_1^*$ would be replaced by $\Isoftst_1$. Thus the solution of problem (\ref{Prob_Opt_Pow_Control_Soft}) can be summarized on 3 steps: 1) Fix $\lambdadst \in \mathbb{R}^+$ and find the corresponding optimum $\lambdaist$ using the modified version of Algorithm \ref{Alg_lambda_P_lambda_D}. 2) Substitute the pair $(\lambdaist,\lambdadst)$ in equations (\ref{Water_Filling_Soft}) and (\ref{gammaS_i}) to get the power and threshold functions, then evaluate $\Usoftst_1$ from (\ref{Reward_Soft}). 3) Repeat steps 1 and 2 for other values of $\lambdadst$ until reaching the optimum $\lambdadst$ that satisfies $\psoftst_1=\invDmaxline$ and if there are multiple $\lambdadst$'s satisfying $p_1^*=\invDmaxline$, then the optimum one is the one that gives the highest $\Usoftst_1$. This approach yields the optimal solution. Next, Theorem \ref{Thm_Unique_Solution_I} asserts the monotonicity of $\Isoftst_1$ in $\lambdaist$ which allows using the bisection to find $\lambdaist$ given some fixed $\lambdadst$.
\begin{thm}
\label{Thm_Unique_Solution_I}
%When equations (\ref{Water_Filling_Soft}), (\ref{gammaS_i}) are satisfied, 
$\Isoftst_1$ is decreasing in $\lambdaist \in [0,\infty)$ given some fixed $\lambdadst \geq 0$.% And the optimal value $\lambdaist$ satisfying $I_1^*=\Iavg$ is upper bounded by $\lambdaimax \triangleq \sum_{i=1}^M \theta_i c_i /\Iavg$.
\end{thm}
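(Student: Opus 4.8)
\emph{Proof proposal.} The plan is to deduce the monotonicity of $\Isoftst_1$ in $\lambdaist$ from the standard ``a constraint value is monotone in its own multiplier'' argument; the only non-trivial ingredient is that, even though \eqref{Prob_Opt_Pow_Control_Soft} is not convex, the pair \eqref{Water_Filling_Soft}--\eqref{gammaS_i} is a \emph{global} maximizer of the associated Lagrangian (with $\lambdadst$ held fixed). Fix $\lambdadst \ge 0$; for a candidate policy $\pi = \lb \GammaS{1},\bfPS{1} \rb$ write $L(\pi;\lambdaist) \triangleq \Usoft_1(\pi) - \lambdaist \lb \Isoft_1(\pi) - \Iavg \rb + \lambdadst \lb \psoft_1(\pi) - \invDmax \rb$, and let $\pi^*(\lambdaist)$ denote the policy $\lb \Gammasst{1},\PSst{1} \rb$ produced by \eqref{Water_Filling_Soft}--\eqref{gammaS_i} via back-substitution from channel $M$, as in the three-step procedure stated just before the theorem. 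Suppose it has been established that $\pi^*(\lambdaist)$ maximizes $L(\cdot;\lambdaist)$ over all policies, for every $\lambdaist \ge 0$. Then, given $0 \le \lambda_1 < \lambda_2$, combining the optimality of $\pi^*(\lambda_1)$ for $L(\cdot;\lambda_1)$ with that of $\pi^*(\lambda_2)$ for $L(\cdot;\lambda_2)$ produces two inequalities whose sum, once the throughput, delay and constant terms cancel, reads $\lb \lambda_2 - \lambda_1 \rb \lb \Isoft_1(\pi^*(\lambda_1)) - \Isoft_1(\pi^*(\lambda_2)) \rb \ge 0$; hence $\Isoftst_1$ is non-increasing in $\lambdaist$, which is exactly what makes the bisection in the modified Algorithm \ref{Alg_lambda_P_lambda_D} well posed.

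The substantive step is the global-maximizer claim, which I would prove by exploiting the layered (dynamic-programming) structure of \eqref{Reward_Soft}, \eqref{Interference_Soft} and \eqref{Prob_recursive_Soft}. Define $A_i \triangleq \Usoft_i - \lambdaist \Isoft_i - \lambdadst \lb 1 - \psoft_i \rb$. Since the three recursions share the common skip factor $\intinf \int_0^{\gammaz{i}} \pdf(\gamma)\,d\gamma\,f_z(z)\,dz$ and $\prbz{i} = (1-\theta_i)\fzb / f_z(z)$ by \eqref{Cond_Prob_i_given_z_i}, one obtains the Bellman-type identity
\[
A_i = c_i \intinf \int_{\gammaz{i}}^{\infty} \left[ \log\lb 1 + \PS{i}\gamma \rb - \lambdaist \prbz{i}\,\PS{i} \right] \pdf(\gamma)\,d\gamma\,f_z(z)\,dz \;+\; \left( \intinf \int_0^{\gammaz{i}} \pdf(\gamma)\,d\gamma\,f_z(z)\,dz \right) A_{i+1},
\]
and $L(\pi;\lambdaist)$ equals $A_1$ plus a constant that does not depend on $\pi$, so maximizing $L(\cdot;\lambdaist)$ amounts to maximizing $A_1$. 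Because $A_i$ depends on the downstream variables only through the scalar $A_{i+1}$, with a nonnegative coefficient, the maximization decouples stage by stage from $i = M$ down to $i = 1$: given the already-maximized $A_{i+1}$, one first maximizes over the transmit power, separately for each $(\gamma,z)$ --- a strictly concave scalar problem whose unique solution is the water-filling level \eqref{Water_Filling_Soft} --- and then, substituting that value, maximizes over the threshold function $\gammaz{i}$, which decouples across $z$ into one-dimensional problems whose derivative is proportional to $\pdf(t)\lb A_{i+1} - c_i\,\psi_i(t,z) \rb$, where $\psi_i(t,z)$ is the optimal inner value at gain $t$: it vanishes for $t \le \lambdaist\prbz{i}$ and equals $\log\lb t/(\lambdaist\prbz{i}) \rb - 1 + \lambdaist\prbz{i}/t$ above it. As $\psi_i(\cdot,z)$ is nonnegative and nondecreasing, this derivative changes sign at most once, from $+$ to $-$, so the problem is unimodal with unique maximizer \eqref{gammaS_i}; a backward induction along the same recursion gives $A_{i+1} \ge 0$ at the solution, so the $\lb \cdot \rb^+$ in \eqref{gammaS_i} is inactive and $\gammazst{i} \ge \lambdaist\prbz{i}$, the underlay analogue of Lemma \ref{Lma_Stochastic_Dominance}.

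I expect the main obstacle to be the unimodality of the per-$z$ threshold subproblem: one must verify carefully that $\psi_i(\cdot,z)$ is continuous and nondecreasing --- so that $A_{i+1} - c_i\psi_i(t,z)$ is monotone in $t$ and \eqref{gammaS_i} genuinely locates a maximizer rather than a saddle point or a minimizer --- and also handle the degenerate regime in which the water level $\lambdaist\prbz{i}$ lies beyond most of the support of $\pdf$, where channel $i$ is skipped almost surely. The remaining ingredients --- the stage-wise decomposition, the strict concavity of the per-channel power problem (which also gives uniqueness of the maximizer, hence strict decrease of $\Isoftst_1$ on the range where it is positive), and the two-line cancellation in the duality argument --- are routine. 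Finally, I would remark that this argument is the exact counterpart of the proof of Theorem \ref{Thm_Unique_Solution_S}: setting $\prbz{i} \equiv \theta_i$, removing the $z$-integration, and replacing $\Isoft$ by $S$ turns it into the overlay statement, so the two monotonicity results can be obtained in one stroke.
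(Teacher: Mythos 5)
Your route is genuinely different from the paper's. The paper simply differentiates $\Isoftst_1$, $\Usoftst_1$ and $\psoftst_1$ with respect to $\lambdaist$ along the closed-form solution \eqref{Water_Filling_Soft}--\eqref{gammaS_i} and shows by backward induction, exactly mirroring Appendix~\ref{Apdx_Unique_Solution_S}, that every term of $\partial\Isoftst_1/\partial\lambdaist$ is negative; it never needs the closed-form policy to be a global maximizer of anything. You instead use the two-inequality exchange argument, which stands or falls with the claim that the policy defined by \eqref{Water_Filling_Soft}--\eqref{gammaS_i} globally maximizes $L(\cdot;\lambdaist)$ for every $\lambdaist$. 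Your Bellman identity for $A_i \triangleq \Usoft_i - \lambdaist\Isoft_i - \lambdadst\lb 1-\psoft_i\rb$ is correct (using \eqref{Cond_Prob_i_given_z_i} to fold $(1-\theta_i)\fzb$ into $\prbz{i}\,f_z(z)$), as are the stage-wise decoupling and the strict concavity of the per-$(\gamma,z)$ power subproblem.

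The gap is the assertion that ``backward induction gives $A_{i+1}\ge 0$ at the solution, so the $(\cdot)^+$ in \eqref{gammaS_i} is inactive.'' The recursion terminates at $A_{M+1}=U_{M+1}-\lambdaist I_{M+1}-\lambdadst\lb 1-p_{M+1}\rb=-\lambdadst$, which is strictly negative whenever the delay constraint is active. In that regime your per-$z$ derivative $\pdf(t)\lb A_{i+1}-c_i\psi_i(t,z)\rb$ is negative for all $t$, so the unimodal subproblem is maximized at $t=0$ (stopping and transmitting with zero power still counts as a ``success'' in $\psoft_i$ and is rewarded by $\lambdadst$), not at $t=\lambdaist\prbz{i}$ as \eqref{gammaS_i} prescribes; for each $z$ the two choices differ in stage value by $-F_\gamma\lb\lambdaist\prbz{i}\rb A_{i+1}>0$. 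Hence the policy whose monotonicity the theorem asserts need not be a global maximizer of $L(\cdot;\lambdaist)$ when $\lambdadst>0$, the two optimality inequalities you add are not both available, and the exchange argument does not close. Your proof is complete only for $\lambdadst=0$, where indeed $A_{M+1}=0$ and your induction gives $A_i\ge 0$ throughout. To cover $\lambdadst>0$ you would have to show optimality of $\pi^*(\lambdaist)$ at least within the one-parameter family $\{\pi^*(\mu)\}_{\mu\ge 0}$ (all the exchange argument really needs, but not obvious), or revert to the paper's direct differentiation. A minor further point: the exchange argument only yields that $\Isoftst_1$ is non-increasing; the strictness needed for a unique bisection root must be supplied separately, as you acknowledge.
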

\begin{proof}
We differentiate $\Isoftst_1$ with respect to $\lambdaist$ given that $\PSst{i}$ and $\gammazst{i}$ are given by equations \eqref{Water_Filling_Soft} and \eqref{gammaS_i} respectively, then show that this derivative is negative. The proof is omitted since it follows the same lines of Theorem \ref{Thm_Unique_Solution_S}. 
\end{proof}

Although the interference power constraint is sufficient for the problem to prevent the power functions from going to infinity, in some applications one may have an additional power constraint on the SUs. Hence, problem (\ref{Prob_Opt_Pow_Control_Soft}) can be modified to introduce an average power constraint that is given by $\Ssoft_1(\GammaS{1},\bfPS{1}) \leq \Pavg$ where
\begin{equation}
\begin{array}{ll}
\Ssoft_i(\GammaS{i},\bfPS{i})&=c_i \int_{-\infty}^{\infty}{\int_{\gammaz{i}}^{\infty}{\PS{i} f_{\gamma}(\gamma)} \, d\gamma f_z(z)} \, dz \\
&+ \pskip{i}\Ssoft_{i+1}(\GammaS{i+1},\bfPS{i+1}).
\label{Avg_Pow_Soft}
\end{array}
\end{equation}
It can be easily shown that the solution to the modified problem is similar to that presented in equations (\ref{Water_Filling_Soft}) and (\ref{gammaS_i}) which is
\begin{align}
\label{Water_Filling_Soft_Avg_Pow}
	&\PSst{i}=\left( \frac{1}{\lambdapst +\lambdaist {\rm Pr}\left[b_i=1 \vert z \right]} - \frac{1}{\gamma}\right)^+ \hspace{0.5cm},\\
	\label{gammaS_i_Avg_Pow}
\nonumber &\gammazst{i}=\\
&\frac{-\lb \lambdapst + \lambdaist {\rm Pr}\left[b_i=1 \vert z \right] \rb}{W_0 \left( -\exp \left(-\frac{\left(\Usoftst_{i+1} - \lambdaist \Isoftst_{i+1} -\lambdapst \Ssoftst_{i+1} - \lambdadst \left(1-\psoftst_{i+1} \right) \right)^+}{c_i}-1\right) \right)},
\end{align}
$\forall i\in \script{M}$ where $\Ssoft_i^* \triangleq \Ssoft_i(\Gammasst{i},\PSst{i})$. This solution is more general since it takes into account both the average interference and the average power constraint besides the delay constraint. Moreover, it allows for the case where the power constraint is inactive which happens if the PU has a strict average interference constraint. In this case the optimum solution would result in $\lambdapst=0$ making equations (\ref{Water_Filling_Soft_Avg_Pow}) and (\ref{gammaS_i_Avg_Pow}) identical to equations (\ref{Water_Filling_Soft}) and (\ref{gammaS_i}) respectively.

\section{Multiple Secondary Users}
\label{Multi_SU}
In this section, we show how our single SU framework can be extended to multiple SUs in a multiuser diversity framework without increase in the complexity of the algorithm. We will show that when the number of SUs is high, with slight modifications to the definitions of the throughput, power and probability of success, the single SU optimization problem in \eqref{Prob_Opt_Pow_Control} (or \eqref{Prob_Opt_Pow_Control_Soft}) can capture the multi-SU scenario. Moreover, the proposed solution for the overlay model still works for the multi-SU scenario. Finally, at the end of this section, we show that the proposed algorithm provides a throughput-optimal and delay-optimal solution with even a lower complexity for finding the thresholds compared to the single SU case, if the number of SUs is large.

Consider a CR network with $L$ SUs associated with a centralized secondary base station (BS) in a downlink overlay scenario. Before describing the system model, we would like to note that when we say that channel $i$ will be sensed, this means that each user will independently sense channel $i$ and feedback the sensing outcome to the BS to make a global decision. Although we neglect sensing errors in this section, the analysis will work similarly in the presence of sensing errors by using the underlay model. At the beginning of each time slot the $L$ SUs sense channel 1. If it is free, each SU observes it free with no errors and probes the instantaneous channel gain and feeds it back to the BS. The BS compares the maximum received channel gain among the $L$ received channel gains to $\gammath{1}$. Channel 1 is assigned to the user having the maximum channel gain if this maximum gain is higher than $\gammath{1}$, while the remaining $L-1$ users continue to sense channel 2. On the other hand if the maximum channel gain is less than $\gammath{1}$, channel 1 is skipped and channel 2 is sensed by all $L$ users. 
%each of which is competing to have access to one of the $M$ channels where $L>M$. We assume a centralized base station (BS) in a downlink scenario where SUs are interested in receiving some delay-sensitive packets to the BS. All the SUs are assumed to be in the vicinity of the PU thus they see the same PU's activity on each channel. Each SUs sense the channels sequentially so that they do not interfere with the PU as they operate in an overlay scenario. Moreover, we neglect sensing errors resulting from the channel sensing process.
Unlike the case in the single SU scenario where only a single channel is claimed per time slot, in this multi-SU system, the BS can allocate more than one channel in one time slot such that each SU is not allocated more than one channel and each channel is not allocated to more than one SU. % This means that once a channel is allocated to a user, the sensing process continues with the remaining $L-1$ users who will sense the following channel.
Based on this scheme, the expected per-SU throughput $U_1^L$ is calculated from
\begin{align}
\nonumber U_i^l=&\frac{\theta_i c_i}{l} \int_{\gammath{i}}^{\infty}{\log \lb 1+P_i(\gamma)\gamma \rb f_l(\gamma)} \, d\gamma+\\
 &\theta_i \bar{F}_l(\gammath{i})\left(1-\frac{1}{l} \right) U_{i+1}^{l-1} + \left(1-\theta_i \bar{F}_l(\gammath{i}) \right) U_{i+1}^l
\label{Reward_l_SUs}
\end{align}
$i\in \script{M}$ and $l \in\{L-i+1,...,L\}$ with initialization $U_{M+1}^l=0$. Here $f_l(\gamma)$ represents the density of the maximum gain among $l$ i.i.d. users' gains, while $\bar{F}_l(\gamma)$ is its complementary cumulative distribution function. We study the case where $L \gg M$, thus when a channel is allocated to a user we can assume that the remaining number of users is still $L$. Thus we approximate $l$ with $L$ $\forall l \in\{L-i,...,L\}$ and $\forall i \in \script{M}$. %The throughput $U_1^L$ in this approximated case represents a lower bound on the actual per SU throughput. 
Similar to the the throughput derived in \eqref{Reward_l_SUs}, we could write the exact expressions for the per-SU average power and per-SU probability of transmission. And since $L \gg M$, we can approximate $S_i^l$ with $S_i^L$ and $p_i^l$ with $p_i^L$, $\forall l \in\{L-i+1,...,L\}$ and $\forall i \in \script{M}$. The per-SU expected throughput $U_1^L$, the average power $S_1^L$ and the probability of transmission $p_1^L$ can be derived from
\begin{align}
\nonumber U_i^L(\bfGamma(i),\bfP{i})=&\frac{\theta_i c_i}{L} \int_{\gammath{i}}^{\infty}{\log \lb 1+P_i(\gamma)\gamma \rb f_{L}(\gamma)} \, d\gamma +\\
&\left[1-\frac{\theta_i \bar{F}_{L}(\gammath{i})}{L} \right] U_{i+1}^L \lb \bfGamma(i+1),\bfP{i+1} \rb
\label{Reward_L_SU}\\
\nonumber S_i^L(\bfGamma(i),\bfP{i})=&\frac{\theta_i c_i}{L} \int_{\gammath{i}}^\infty{P_i(\gamma) f_{L}(\gamma) \,d\gamma}+ \\
&\left[1-\frac{\theta_i \bar{F}_{L}(\gammath{i})}{L} \right]S_{i+1}^L(\bfGamma(i+1),\bfP{i+1}),
\label{Average_Power_L_SU}\\
\nonumber p_i^L(\bfGamma(i))=&\frac{\theta_i}{L} \bar{F}_{L}(\gammath{i})+ \\
&\left[1-\frac{\theta_i \bar{F}_{L}(\gammath{i})}{L} \right]p_{i+1}^L(\bfGamma(i+1)),
\label{Prob_Trans_L_SU}
\end{align}
$i\in \script{M}$, respectively, with  $U_{M+1}^L=S_{M+1}^L=p_{M+1}^L=0$. To formulate the multi-SU optimization problem, we replace $U_1$, $S_1$ and $p_1$  in (\ref{Prob_Opt_Pow_Control}) with $U_1^L$, $S_1^L$ and $p_1^L$ derived in equations (\ref{Reward_L_SU}), (\ref{Average_Power_L_SU}) and (\ref{Prob_Trans_L_SU}), respectively. Taking the Lagrangian and following the same procedure as in Section \ref{Problem_Formulation}, we reach at the solution for $P_i^*$ and $\gammathst{i}$ as given by equations (\ref{Water_Filling}) and (\ref{Gamma_Solution_Lambert_W}) respectively. Hence, equations (\ref{Water_Filling}) and (\ref{Gamma_Solution_Lambert_W}) represent the optimal solution for the multi-SU scenario. The details are omitted since they follow those of the single SU case discussed in Section \ref{Problem_Formulation}.

Next we show that this solution is optimal with respect to the delay as well as the throughput when $L$ is large. We show this by studying the system after ignoring the delay constraint and show that the resulting solution of this system (which is what we refer to as the unconstrained problem) is a delay optimal one as well. The solution of the unconstrained problem is given by setting $\lambdadst=0$ in \eqref{Gamma_Solution_Lambert_W} arriving at
\begin{equation}
\gammathst{i}\vert_{\lambdadst=0}=\frac{-\lambdapst}{W_0 \left( -\exp \left(-\frac{\left(U_{i+1}^{L*} - \lambdapst S_{i+1}^{L*}\right)^+}{c_i}-1\right) \right)}.
\label{Gamma_Solution_Lambert_W_Thr_Opt}
\end{equation}
$\forall i \in \script{M}$. As the number of SUs increases, the per-user expected throughput $U_1^L$ decreases since these users share the total throughput. Moreover, $U_i^L$ decreases as well $\forall i \in \script{M}$ decreasing the value of $\gammathst{i}$ (from equation \eqref{Gamma_Solution_Lambert_W_Thr_Opt} until reaching its minimum (i.e. $\gammathst{i}=\lambdapst$) (the right-hand-side of \eqref{Gamma_Solution_Lambert_W_Thr_Opt} is minimum when its denominator is as much negative as possible. That is, when $W_0(x)=-1$ since $W_0(x)\geq -1$, $\forall x\in \mathbb{R}$) as $L \rightarrow \infty$. From \eqref{Prob_Trans_L_SU}, it can be easily shown that $p_1^L(\bfGamma(1))$ is monotonically decreasing in $\gammath{i}$ $\forall i\in \script{M}$. Thus the minimum possible average delay (corresponding to the maximum $p_1^L(\bfGamma(1))$) occurs when $\gammath{i}$ is at its minimum possible value for all $i \in \script{M}$. Consequently, having $\gammathst{i}=\lambdapst$ means that the system is at the optimum delay point. That is, the unconstrained problem cannot achieve any smaller delay with an additional delay constraint. Hence, the multi-SU problem, that is formulated by adding a delay constraint to the unconstrained problem, achieves the optimum delay performance when $L$ is asymptotically large.
%with an additional average delay constraint. Hence, the solution $\gammathst{i}=\lambdapst$ achieves both throughput and delay optimality as the number of SUs increases asymptotically, for the unconstrained problem.

Recall that the overall complexity of solution for the single SU case is due to three factors: 1) evaluating the Lambert W function in Algorithm \ref{Alg_lambda_P_lambda_D}, 2) the bisection algorithm in Algorithm \ref{Alg_lambda_P_lambda_D} and 3) the search over $\lambdad$. On the other hand, the complexity of solution for the multi-SU case decreases asymptotically (as $L\rightarrow \infty$). This is because of two reasons: 1) When $L \gg M$, $\gammathst{i} \rightarrow \lambdapst \forall i \in \script{M}$. Which means that we will not have to evaluate the Lambert W function in \eqref{Gamma_Solution_Lambert_W} but instead we set $\gammathst{i}=\lambdapst$, since $L\gg M$. 2) When $\gammathst{i} = \lambdapst$ there will be no need to find $\lambdadst$ since the delay is minimum (we recall that in the single SU case, we need to calculate $\lambdadst$ to substitute it in \eqref{Gamma_Solution_Lambert_W} to evaluate $\gammathst{i}$, but in the multi-SU case $\gammathst{i} = \lambdapst$).

\section{Generalization of Deadline Constraints}
\label{General_Delay}
In the overlay and underlay schemes discussed thus far, we were assuming that each packet has a hard deadline of one time slot. If a packet is not delivered as soon as it arrives at the ST, then it is dropped from the system. But in real-time applications, data arrives at the ST's buffer on a frame-by-frame structure. Meaning multiple packets (constituting the same frame) arrive simultaneously rather than one at a time. A frame consists of a fixed number of packets, and each packet fits into exactly one time slot of duration $\Ts$. Each frame has its own deadline and thus, packets belonging to the same frame have the same deadline \cite{Adaptive_NC_Deadline}. This deadline represents the maximum number of time slots that the packets, belonging to the same frame, need to be transmitted by, on average.

In this section we solve this problem for the overlay scenario. The solution presented in Section \ref{Problem_Formulation} can be thought of as a special case of the problem presented in this section where the deadline was equal to $1$ time slot and each frame consists of one packet. We show that the solution presented in Section \ref{Problem_Formulation} can be used to solve this generalized problem in an offline fashion (i.e. before attempting to transmit any packet of the frame). Moreover, we propose an online update algorithm that updates the thresholds and power functions each time slot and show that this outperforms the offline solution.

%\subsection{Frame and Packet Structures}

\subsection{Offline Solution}
Assume that each frame consists of $K$ packets and that the entire frame has a deadline of $t_f$ time slots ($t_f>K$). If the SU does not succeed in transmitting the $K$ packets before the $t_f$ time slots, then the whole frame is considered wasted. Since instantaneous channel gains and PU's activities are independent across time slots, the probability that the SU succeeds in transmitting the frame in $t_f$ time slots or less is given by
\begin{equation}
\pframe \lb K,t_f \rb =\sum_{n=K}^{t_f} { \binom{t_f}{n} p^n \lb 1-p \rb^{t_f-n}}
\label{Binomial}
\end{equation}
where $p$ is the probability of transmitting a packet on some channel in a single time slot and is given by (\ref{Prob_recursive}) or (\ref{Prob_recursive_Soft}) if the SU's system was overlay or underlay respectively. $\pframe \lb K,t_f \rb$ represents the probability of finding $K$ or more free time slots out of a total of $t_f$ time slots.

%\subsection{Packet Structure}
%Consider having a SU transmitting a real-time data over the wireless network. Each packet of the real-time data consists of essential bits (EB) and non-essential, or high definition (HD), bits as in Fig. \ref{Packet_Structure}. In order for this packet to be meaningful at the receiver side, at least the EB need to be delivered. That is, if some of the HD bits, or even all, were not delivered, the receiver may still be able to count this packet as a useful packet. These EBs represent the minimum quality-of-service (QoS) that the SU requires for this real-time application. Sending more HD bits will increase the QoS of this real-time data from the SUs perspective but this requires higher throughput.% This means that the will not be doable unless the channel can support high data rate.

%\begin{figure}%
%\centering
%\includegraphics[width=0.7\columnwidth]{figures/Packet_Structure.pdf}
%\caption{Packet Structure of a real-time data application. According to the duration of the transmission phase in the time slot, $L$ bits can be accommodated in this transmission phase, $x$ of which are EB and $L-x$ HD bits.}
%\label{Packet_Structure}%
%\end{figure}

In order to guarantee some QoS for the real-time data the SU needs to keep the probability of successful frame transmission above a minimum value denoted $\rmin$, that is $\pframe \geq \rmin$. Hence the problem becomes a throughput maximization problem subject to some average power and QoS constraints as follows
%In order to transmit the HD bits as well as the EB, it needs to wait for a channel that has a relatively high gain. This forces the SU to skip channels more frequently to find a future channel that might have the required gain which decreases $p$ and, hence, $\pframe(K,t_f)$ preventing the SU from meeting the required QoS.
\begin{equation}
\begin{array}{ll}
\rm{maximize}& U_1(\bfGamma(1),\bfP{1})\\
\label{Optim_Prob_Offline}
\rm{subject \; to} &S_1(\bfGamma(1),\bfP{1}) \leq \Pavg\\
& \pframe (K,t_f) \geq \rmin\\
\rm{variables} & \bfGamma(1),\bfP{1}.
\end{array}
\end{equation}
This is the optimization problem assuming an overlay system since we used equations (\ref{Reward}) and (\ref{Average_Power}) for the throughput and power, respectively. It can also be modified systematically to the case of an underlay system. Since there exists a one-to-one mapping between $\pframe(K,t_f)$ and $p$, then there exists a value for $\Dmax$ such that the inequality $p \geq \invDmaxline$ is equivalent to the QoS inequality $\pframe \lb K,t_f \rb \geq \rmin$. That is, we can replace inequality $\pframe(K,t_f) \geq \rmin$ by $p \geq \invDmaxline$ for some $\Dmax$ that depends on $\rmin$, $K$ and $t_f$ that are known a priori. Consequently, problem (\ref{Optim_Prob_Offline}) is reduced to the simpler, yet equivalent, single-time-slot problem (\ref{Prob_Opt_Pow_Control}) and the SU can solve for $\Pst{1}$ and $\Gammast{1}$ vectors following the approach proposed in Section \ref{Problem_Formulation}. The SU solves this problem offline (i.e. before the beginning of the frame transmission) and uses this solution each time slot of the $t_f$ time slots. With this offline scheme, the SU will be able to meet the QoS and the average power constraint requirements as well as maximizing its throughput.
% We assume that these values are known to the ST as soon as the frame arrives, at its buffer, at the beginning of the $t_f$ time slots.

\subsection{Online Power-and-Threshold Adaptation}
In problem (\ref{Prob_Opt_Pow_Control}), we have seen that as $\invDmaxline$ decreases, the system becomes less stringent in terms of the delay constraint. This results in an increase in the average throughput $U_1^*$. With this in mind, let us assume, in the generalized delay model, that at time slot $1$ the SU succeeds in transmitting a packet. Thus, at time slot $2$ the SU has $K-1$ remaining packets to be transmitted in $t_f-1$ time slots. And from the properties of equation (\ref{Binomial}), $\pframe (K-1,t_f-1) > \pframe (K,t_f)$. This means that the system becomes less stringent in terms of the QoS constraint after a successful packet transmission. This advantage appears in the form of higher throughput. To see how we can make use of this advantage, define $\pframe(K(t),t_f-t+1)$ as
\begin{align}
\nonumber &\pframe \lb K(t),t_f-t+1 \rb =\\
&\sum_{n=K(t)}^{t_f-t+1} { \binom{t_f-t+1}{n} \lb p(t) \rb^n \lb 1-p(t) \rb^{t_f-t+1-n}},
\label{Binomial_t}
\end{align}
where $K(t)$ is the remaining number of packets before time slot $t \in \{1,...,t_f\}$ and $p(t)$ is the probability of successful transmission at time slot $t$. At each time slot $\tinset$, the SU modifies the QoS constraint to be $\pframe(K(t),t_f-t+1) \geq \rmin$ instead of $\pframe(K,t_f) \geq \rmin$, that was used in the offline adaptation, and solve the following problem
\begin{equation}
\begin{array}{ll}
\rm{maximize}& U_1(\bfGamma(1),\bfP{1})\\
\label{Optim_Prob_Online}
\rm{subject \; to} &S_1(\bfGamma(1),\bfP{1}) \leq \Pavg\\
& \pframe (K(t),t_f-t+1) \geq \rmin\\
\rm{variables} & \bfGamma(1),\bfP{1},
\end{array}
\end{equation}
to obtain the power and threshold vectors. When the delay constraint in (\ref{Optim_Prob_Online}) is replaced by its equivalent constraint $p \geq \invDmaxline$, the resulting problem can be solved using the overlay approach proposed in Section \ref{Problem_Formulation} without much increase in computational complexity since the power functions and thresholds are given in closed-form expressions. With this online adaptation, the average throughput $U_1^*$ increases while still satisfying the QoS constraint.% This increase comes at the cost of calculating the optimization variables each time slot.

\section{Numerical Results}
\label{Results}
We show the performance of the proposed solution for the overlay and underlay scenarios. The slot duration is taken to be unity (i.e. all time measurements are taken relative to the time slot duration), while $\tau=0.05\Ts$. Here, we use $M=10$ channels that suffer i.i.d. Rayleigh fading. The availability probability is taken as $\theta_i=0.05i$ throughout the simulations. The power gain $\gamma$ is exponentially distributed as $f_{\gamma} \left( \gamma \right)=\exp \left( \gamma / \bar{\gamma}\right) / \bar{\gamma}$ where $\bar{\gamma}$ is the average channel gain and is set to be $1$ unless otherwise specified.

%We compare the expected throughput for both the optimum and the two-level power control in Fig. \ref{Over_Opt_vs_2Lev_vs_2LevP_theta0p4}. We can see the substantial gain in the throughput from the optimum power control scheme over the two-level scheme.
%%Fig. \ref{Over_Opt_vs_2Lev_vs_2LevP_theta0p4} also plots the throughput of the system when forcing $P=1$. This corresponds to a simple two-level scheme that does not control the power. Although this scheme has the same average power as in the two-level one, yet the latter gave higher throughput. This is because we take advantage of the degree of freedom, in controlling the power, that the two-level provides us with.
%\begin{figure}
%	\centering
%		\includegraphics[width=0.9\columnwidth]{figures/Over_Opt_vs_2Lev_vs_2LevP_theta0p4.pdf}
%\caption{The expected throughput versus the average channel gain for the optimum power control, the two-level power control with optimum $P$ and the two-level power control with $P=1$.}%
%	\label{Over_Opt_vs_2Lev_vs_2LevP_theta0p4}
%\end{figure}

Fig. \ref{Overlay_Fig_2_Unconst_Const_NoOSR_M10} plots the expected throughput $U_1^*$ for the overlay scenario after solving problem \eqref{Prob_Opt_Pow_Control}. $U_1^*$ is plotted using equation \eqref{Reward} that represents the average number of bits transmitted divided by the average time required to transmit those bits, taking into account the time wasted due to the blocked time slots. We plot $U_1^*$ with $\Dmax=1.02 \Ts$ and with $\Dmax=\infty$ (i.e. neglecting the delay constraint). We refer to the former problem as constrained problem, while to the latter as unconstrained problem. We also compare the performance to the non optimum stopping rule case (No-OSR) where the SU transmits at the first available channel. We expect the No-OSR case to have the best delay and the worst throughput performances. We can see that the unconstrained problem has the best throughput amongst all constrained problems.

Examining the constrained problem for different sensing orders of the channels, we observe that when the channels are sorted in an ascending order of $\theta_i$, the throughput is higher. This is because a channel $i$ has a higher chance of being skipped if put at the beginning of the order compared to the case if put at the end of the order. This is a property of the problem no matter how the channels are ordered, i.e. this property holds even if all channels have equal values of $\theta_i$. Hence, it is more favorable to put the high quality channels at the end of the sensing order so that they are not put in a position of being frequently skipped. However, this is not necessarily optimum order, which is out of the scope of this work and is left as a future work for this delay-constrained optimization problem.

We also plot the expected throughput of a simple stopping rule that we call $K$-out-of-$M$ scheme, where we choose the highest $K$ channels in availability probability and ignore the remaining channels as if they do not exist in the system. The SU senses those $K$ channels sequentially, probes the gain of each free channel, if any, and transmits on the channel with the highest gain. This scheme has a constant fraction $K\tau/\Ts$ of time wasted each slot. Yet it has the advantage of choosing the best channel among multiple available ones. In Fig. \ref{Overlay_Fig_2_Unconst_Const_NoOSR_M10} we can see that the degradation of the throughput when $K=5$ compared to the optimal stopping rule scheme. The reason is two-fold: 1) Due to the constant wasted fraction of time, and 2) Ignoring the remaining channels that could potentially be free with a high gain if they were considered as opposed to the constrained problem.

The delay is shown in Fig. \ref{Overlay_Fig_3_Unconst_Const_NoOSR_M10} for both the constrained and the unconstrained problems. We see that the unconstrained problem suffers around $6\%$ increase in the delay, at $\Pavg=10$, compared to the constrained one.

\begin{figure}
	\centering
	\includegraphics[width=1\columnwidth]{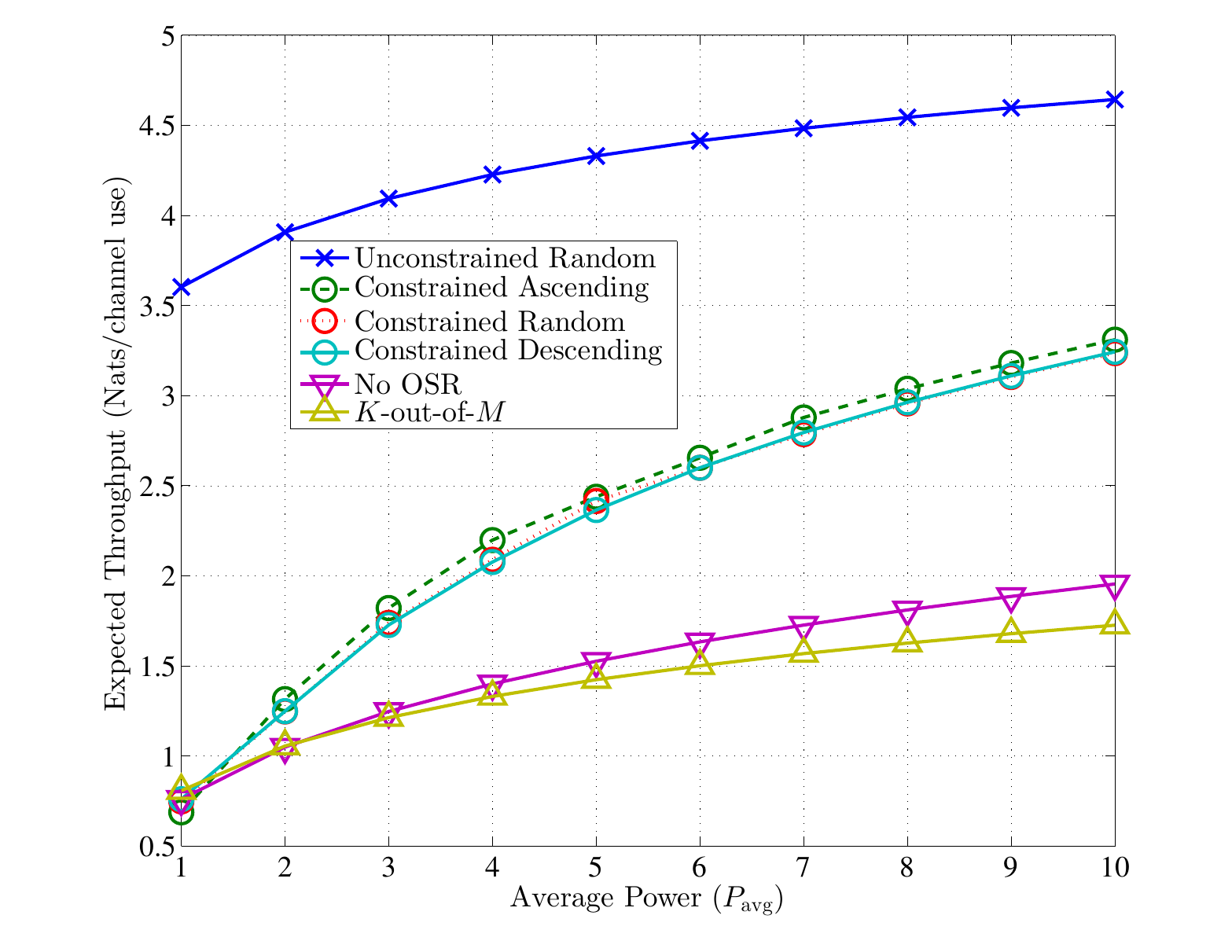}
		\caption{The expected throughput for the overlay scenario for four cases: 1) {\bf Proposed constrained problem}: with average delay constraint for three channel ordering possibilities (ascending ordering of channel availability probabilities, descending ordering, and random ordering), 2) {\bf Unconstrained problem} that ignores the delay constraint, 3) {\bf No optimum stopping rule (No-OSR)} where the SU transmits at the first free channel  and 4) {\bf $K$-out-of-$M$ scheme} where the SU assumes the system has only $K=5$ channels and ignores the remaining $M-K$ channels.}
		%\caption{The expected throughput for the overlay scenario for four cases: 1) Proposed constrained problem: with average delay constraint for three channel ordering possibilities (ascending ordering of $\theta_i$, descending ordering, and random ordering), 2) Unconstrained problem that ignores the delay constraint, 3) No optimum stopping rule (No-OSR) and 4) $K$-out-of-$M$ scheme.}
	\label{Overlay_Fig_2_Unconst_Const_NoOSR_M10}
\end{figure}

\begin{figure}
	\centering
	\includegraphics[width=1\columnwidth]{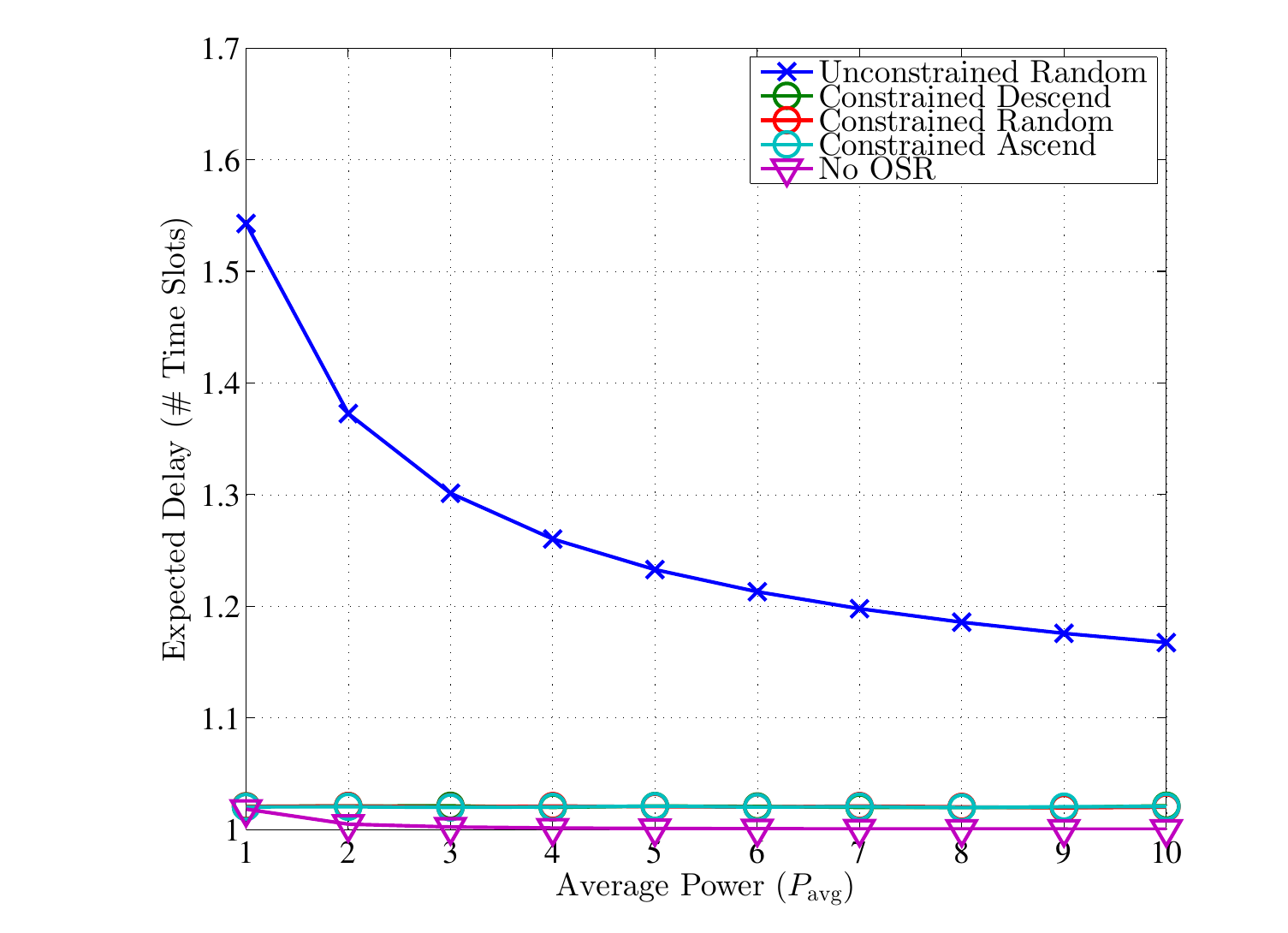}
		\caption{The expected delay for the overlay scenario for problem (\ref{Prob_Opt_Pow_Control}). The unconstrained problem can suffer arbitrary high delay. The constrained problem has a guaranteed average delay for all ordering strategies. The No-OSR scenario, on the other hand, has the best delay performance since the SU uses the first free channel.}
	\label{Overlay_Fig_3_Unconst_Const_NoOSR_M10}
\end{figure}

%\begin{figure}
%	\centering
%	\subfloat\small(a){\includegraphics[width=0.45\columnwidth]{figures/Overlay_Fig2_Unconst_Const_NoOSR_M10.pdf}}
%	\subfloat\small(b){\includegraphics[width=0.48\columnwidth]{figures/Overlay_Fig3_Unconst_Const_NoOSR_M10.pdf}}
%		\caption{(a) The expected throughput for the overlay scenario for four cases: 1) {\bf Proposed constrained problem}: with average delay constraint for three channel ordering possibilities (ascending ordering of channel availability probabilities, descending ordering, and random ordering), 2) {\bf Unconstrained problem} that ignores the delay constraint, 3) {\bf No optimum stopping rule (No-OSR)} where the SU transmits at the first free channel  and 4) {\bf $K$-out-of-$M$ scheme} where the SU assumes the system has only $K=5$ channels and ignores the remaining $M-K$ channels. (b) The expected delay for the overlay scenario for problem (\ref{Prob_Opt_Pow_Control}). The unconstrained problem can suffer arbitrary high delay. The constrained problem has a guaranteed average delay for all ordering strategies. The No-OSR scenario, on the other hand, has the best delay performance since the SU uses the first free channel.}
%	\label{Overlay_Fig_2_3_Unconst_Const_NoOSR_M10}
%\end{figure}
Studying the system performance under low average channel gain is essential. A low average channel gain represents a SU's channel being in a permanent deep fade or if there is a relatively high interference level at the secondary receiver. Fig. \ref{Overlay_Fig4_Optimum_Threshold} shows $\gammathst{i}$ versus the $\bar{\gamma}$. At low $\bar{\gamma}$, the throughput is expected to be small, hence $\gammathst{i}$ is close to its minimum value $\lambdapst$ so that even if $\gamma_i$ is relatively small, $i$ should not be skipped. In other words, at low average channel gain, the expected throughput is small, thus a relatively low instantaneous gain will be satisfactory for stopping at channel $i$. While when the average channel gain increases, $\gammathst{i}$ should increase so that only high instantaneous gains should lead to stopping at channel $i$. In both cases, high and low $\bar{\gamma}$ there still is a trade-off between choosing a high versus a low value of $\gammathst{i}$.
\begin{figure}[htbp]
	\centering
		\includegraphics[width=1\columnwidth]{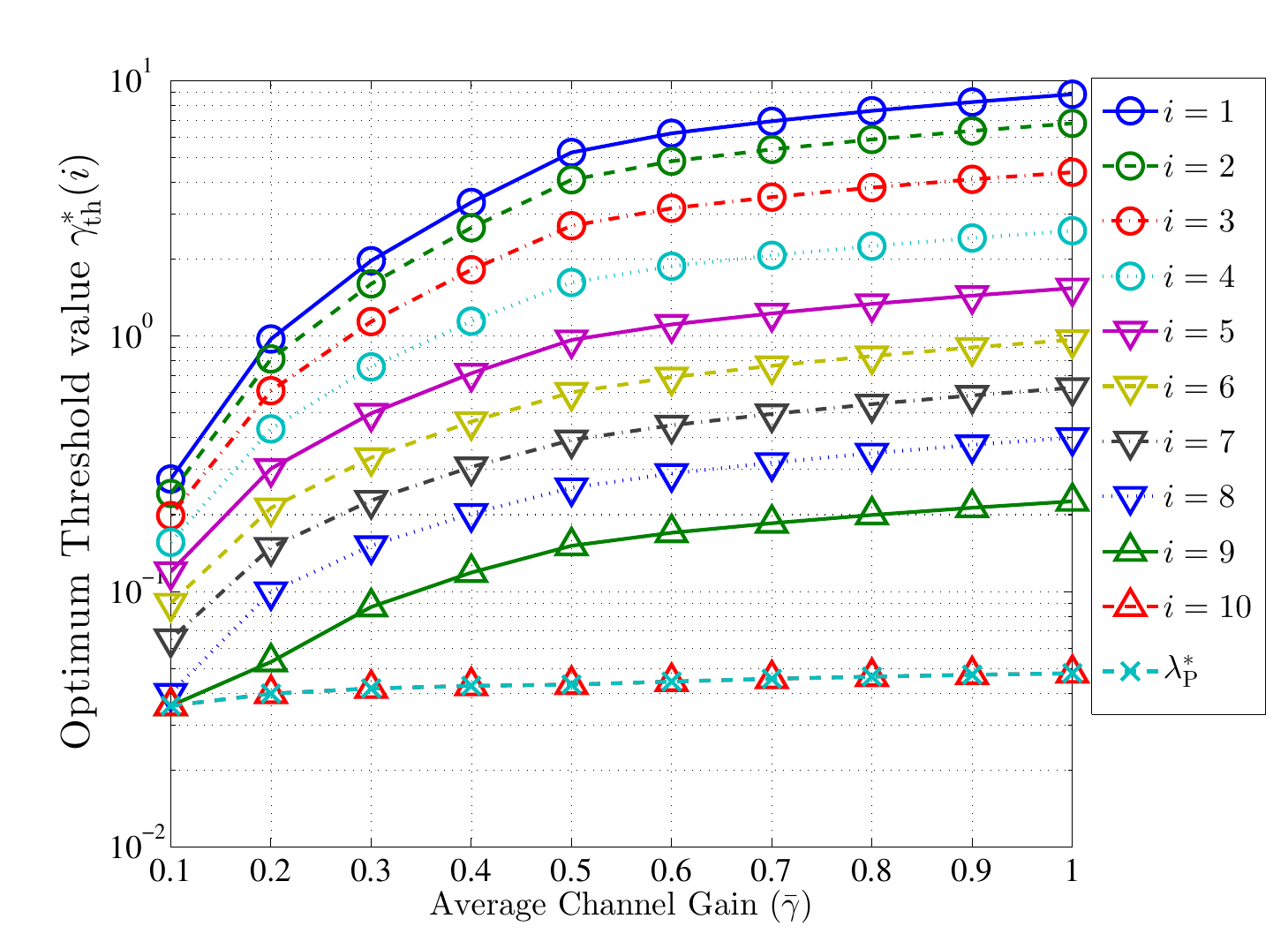}
	\caption{The gap between the optimum threshold $\gammathst{i}$ and its minimum value $\lambdapst$ increases as the average gain increases. This is because as $\bar{\gamma}$ increases, $U_{i+1}$ increases as well. Hence $\gammathst{i}$ increases so that only sufficiently high instantaneous gains should lead to stopping at channel $i$.}
	\label{Overlay_Fig4_Optimum_Threshold}
\end{figure}

The sensing channel (i.e. the channel between the PT and ST over which the ST overhears the PT activity) is modeled as AWGN with unit variance.
%We assume an energy detector that measures the energy of $N=10$ samples in order to sense a given channel. Thus the energies of the $N$ samples have i.i.d. distributions. Their average, which is the sufficient statistic $z$.
 The distributions of the energy detector output $z$ (average energy of N samples sampled from this sensing channel) under the free and busy hypotheses are the Chi-square and a Noncentral Chi-square distributions given by
\begin{align}
\label{X2_Distribution}
&\fzf =\left( \frac{N}{\sigma^2} \right)^N \frac{z^{N-1}}{\left( N-1 \right)!} \exp \left( \frac{-N z}{\sigma^2} \right),\\
\label{NCX2_Distribution}
\nonumber &\fzb =\\
&\left( \frac{N}{\sigma^2} \right) \left( \frac{z}{\cal E} \right)^{\frac{N-1}{2}} \exp \left( \frac{-N \left( z+ {\cal E} \right)}{\sigma^2} \right) \Ibessel{N-1} \left( \frac{2N \sqrt{{\cal E} z}}{\sigma^2} \right),
\end{align}
where $\sigma^2$, which is set to $1$, is the variance of the Gaussian noise of the energy detector, $\cal E$ is the amount of energy received by the ST due to the activity of the PT and is taken as ${\cal E}=2\sigma^2$ throughout the simulations, while $\Ibessel{i}(x)$ is the modified Bessel function of the first kind and $i$th order, and $N=10$.
%Fig. \ref{UndSoft2Lev_of_Opt_vs_Subopt_vs_I_avg} plots the throughput of the SU versus the average interference limit $\Iavg$. We can see that the proposed power control scheme gives close-to-optimal performance within $6\%$ of the optimal scheme. This gap decreases as $\Iavg$ decreases, i.e. as the PU's network becomes more intolerable to the SU's interference.
%\begin{figure}%
%\includegraphics[width=1\columnwidth]{}%
%\caption{%The proposed power control expression in (\ref{Power_Proposed_Soft}) gives close to optimal throughput in the underlay scheme with soft decisions.
%}
%\label{UndSoft2Lev_of_Opt_vs_Subopt_vs_I_avg}%
%\end{figure}

The main problem we are addressing in this paper is the optimal stopping rule that dictates for the SU when to stop sensing and start transmitting. As we have seen, this is identified by the threshold vector $\Gammasst{1}$. If the SU does not consider the optimal stopping rule problem and rather transmits as soon as it detects a free channel, then it will be wasting future opportunities of possibly higher throughput. Hence, we expect a degradation in the throughput. We plot the two scenarios in Fig. \ref{Underlay_OSR_vs_noOSR_vs_Iavg_M10} for the underlay system with no delay constraint.

\begin{figure}[htbp]
	\centering
		\includegraphics[width=1\columnwidth]{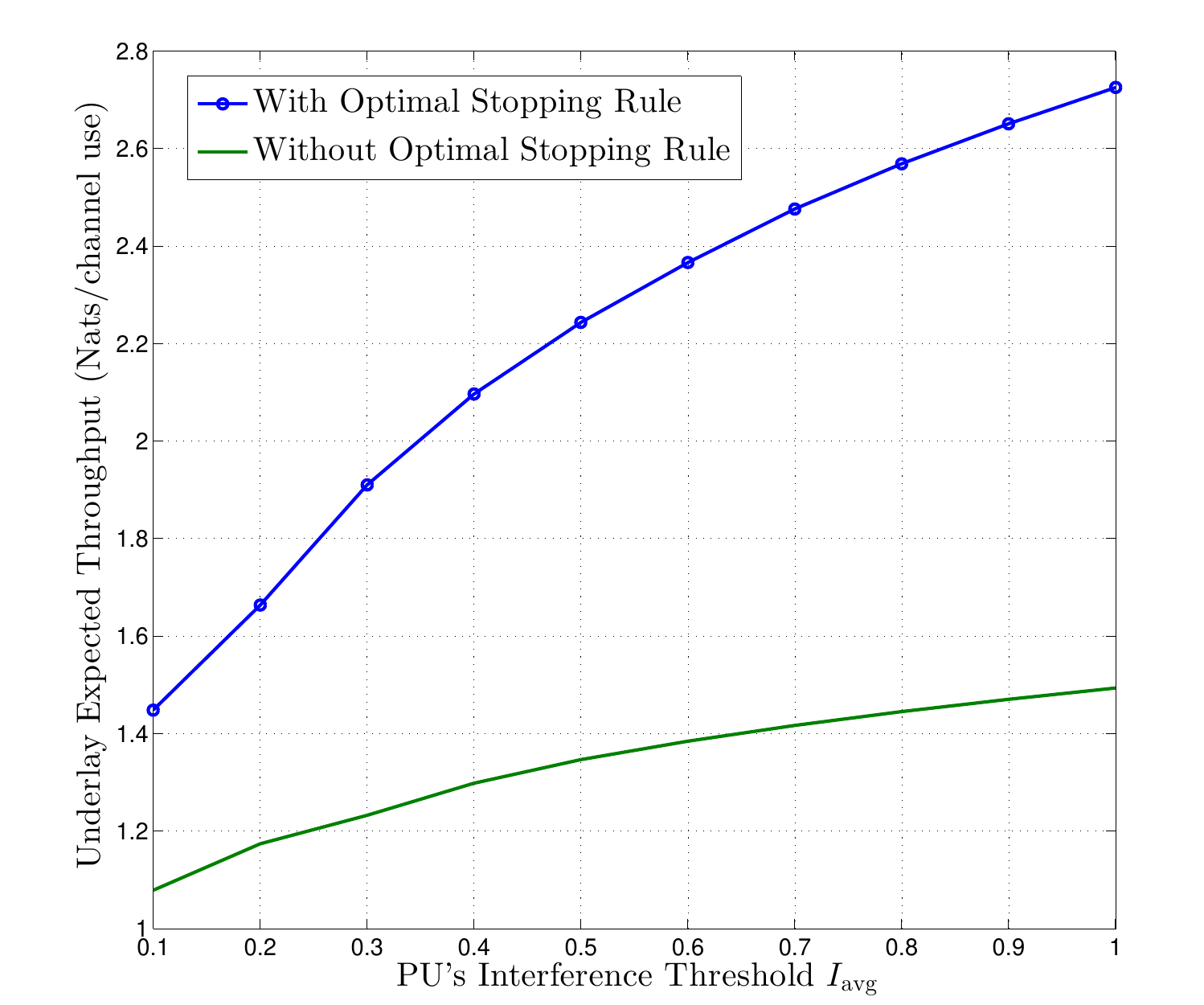}
	\caption{The underlay expected throughput versus the average interference threshold $I_{\rm avg}$. Two scenarios are shown: with and without the optimal stopping rule formulation. In the latter, the SU transmits as soon as a channel is found free.}% This wastes some future opportunities of possibly higher channel gains. Hence, the throughput decreases.}
	\label{Underlay_OSR_vs_noOSR_vs_Iavg_M10}
\end{figure}

For the multiple SU scenario, the numerical analysis were run for the case of $L=30$ SUs while $M=10$ channels. We assumed the fading channels are i.i.d. among users and among frequency channels. Each channel is exponentially distributed with unity average channel gain. And since $L$ is large, the distribution of the maximum gain among $L$ random gains converges in distribution to the Gumbel distribution \cite{zhang2009asymptotic} having a cumulative distribution function of $\exp \left( -\exp \left( -\gamma/\bar{\gamma}\right)\right)$. The per-user throughput $U_1^{L*}$ is plotted in Fig. \ref{Throughput_MultiSU} where the throughput of the delay-constrained and of the unconstrained optimization problems coincide. This is because when $L\gg M$, the solution of the unconstrained problem is delay optimal as well. Hence, adding a delay constraint does not sacrifice the throughput, when $L$ is large. Moreover, the delay performance shown in Fig. \ref{Delay_MultiSU} shows that the delay does not change with and without considering the average delay constraint since the system is delay- (and throughput-) optimal already.

%\begin{figure}%
	%\centering
	%\subfloat\small(a){\includegraphics[width=0.45\columnwidth]{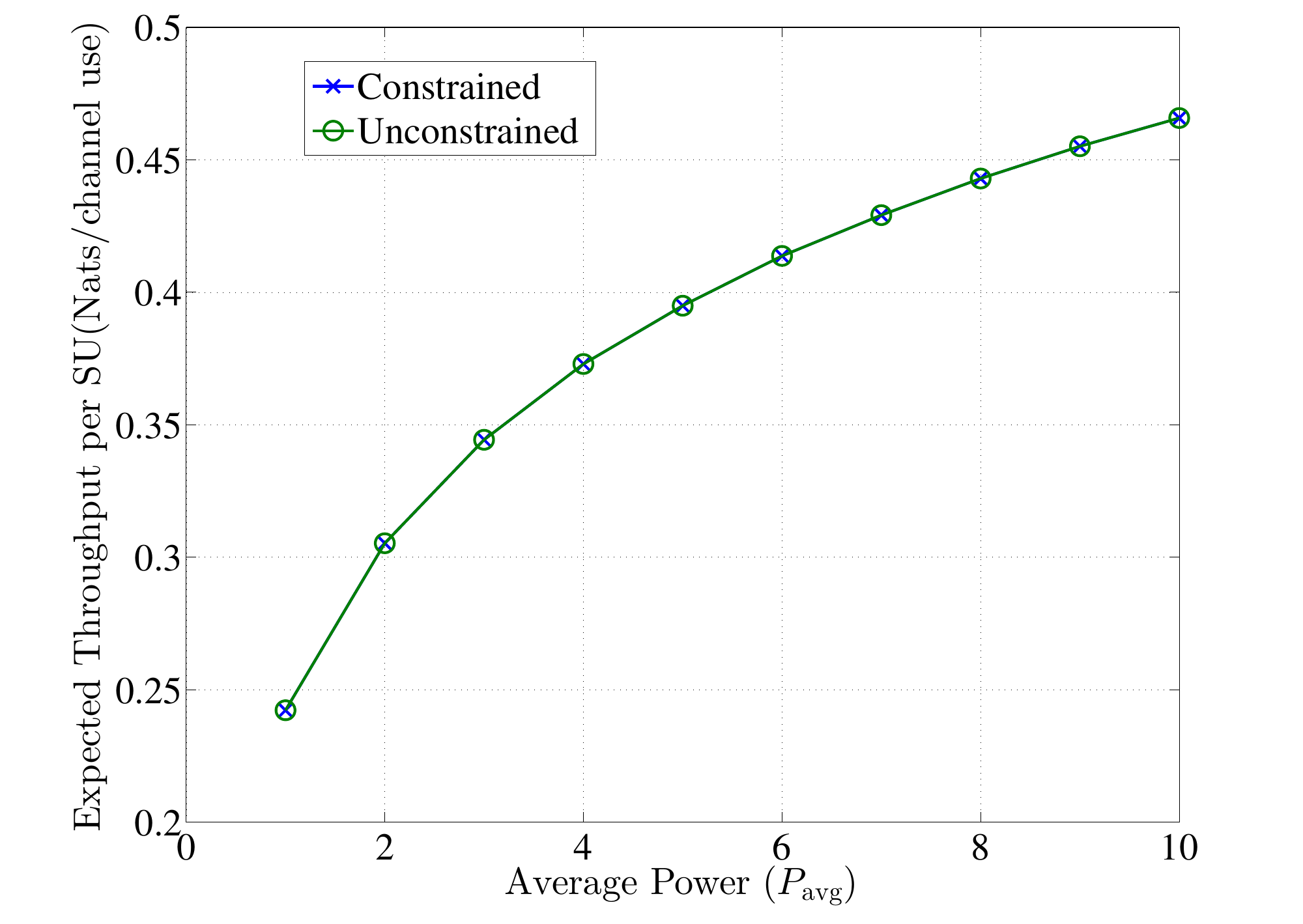}}%\caption{}}
	%%\label{Throughput_Delay_MultiSU}
	%\subfloat\small(b){\includegraphics[width=0.45\columnwidth]{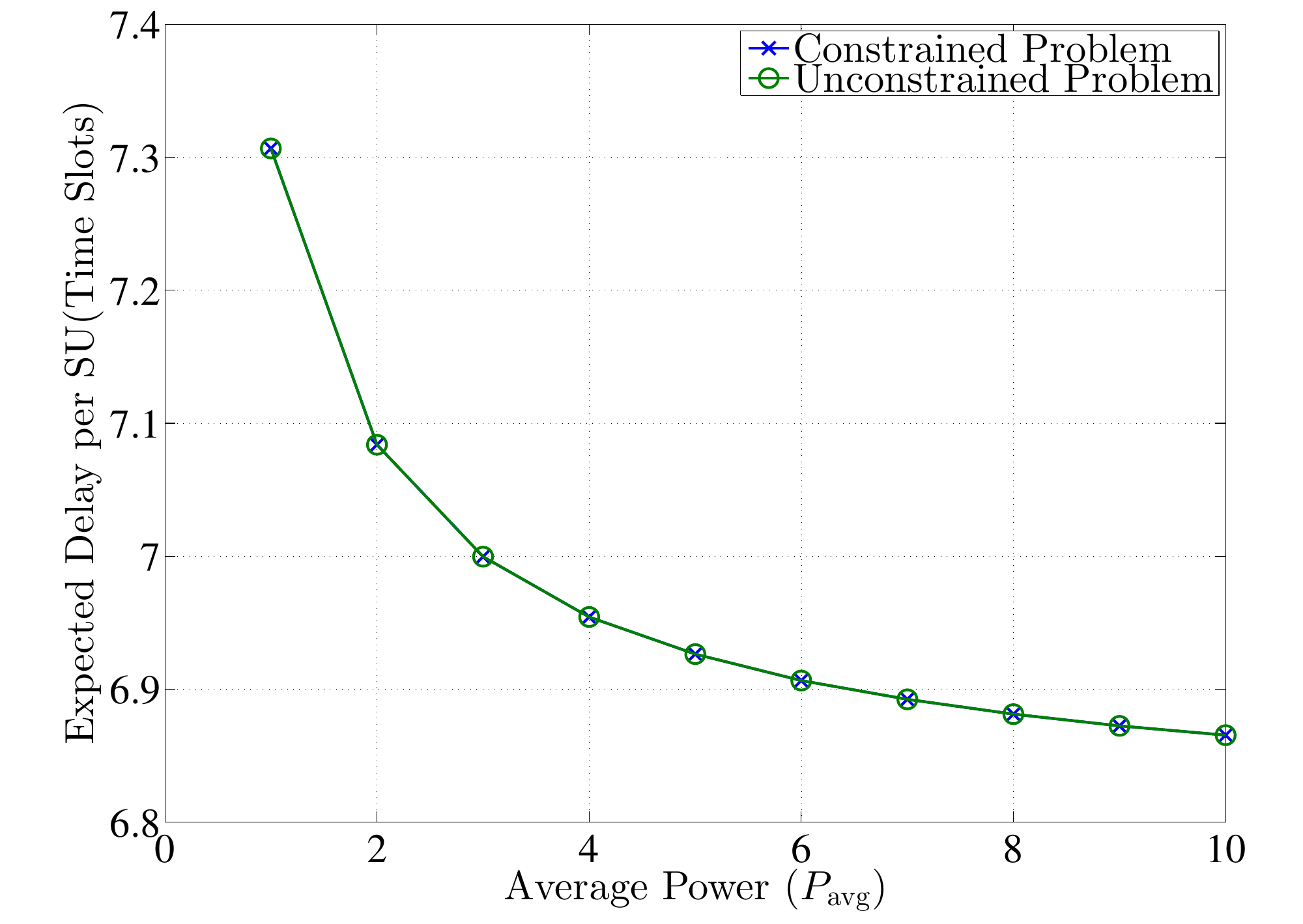}}
	%\caption{(a) Per user throughput of the system at $L=30$ SUs. The throughput of the constrained and unconstrained problem coincide since the system is throughput (and delay) optimal. (b) The average delay seen by each user in the system at $L=30$ SUs. The delay of the constrained and unconstrained problems coincide since the system is delay (and throughput) optimal.}
	%\label{Throughput_Delay_MultiSU}
%\end{figure}
\begin{figure}%
	\centering
	\includegraphics[width=1\columnwidth]{}
	\caption{Per user throughput of the system at $L=30$ SUs. The throughput of the constrained and unconstrained problem coincide since the system is throughput (and delay) optimal.}
	\label{Throughput_MultiSU}
	\end{figure}
	\begin{figure}
	\centering
	\includegraphics[width=1\columnwidth]{}
	\caption{The average delay seen by each user in the system at $L=30$ SUs. The delay of the constrained and unconstrained problems coincide since the system is delay (and throughput) optimal.}
	\label{Delay_MultiSU}
\end{figure}

We have simulated the system for the online algorithm of Section \ref{General_Delay} for $K(1)=2$ packets and $t_f=4$ time slots. We simulated the system at $\rmin=0.95$ which means that the QoS of the SU requires that at least $95\%$ of the frames to be successfully transmitted. Fig. \ref{General_Delay_Overlay_OptPow} shows the improvement in the throughput of the online over the offline adaptation. This is because the SU adapts the power and thresholds at each time slot to allocate the remaining resources (i.e. remaining time slots) according to the remaining number of packets and the desired QoS. This comes at the expense of re-solving the problem at each time slot (i.e. $t_f$ times more).

\begin{figure}%
\centering
\includegraphics[width=1\columnwidth]{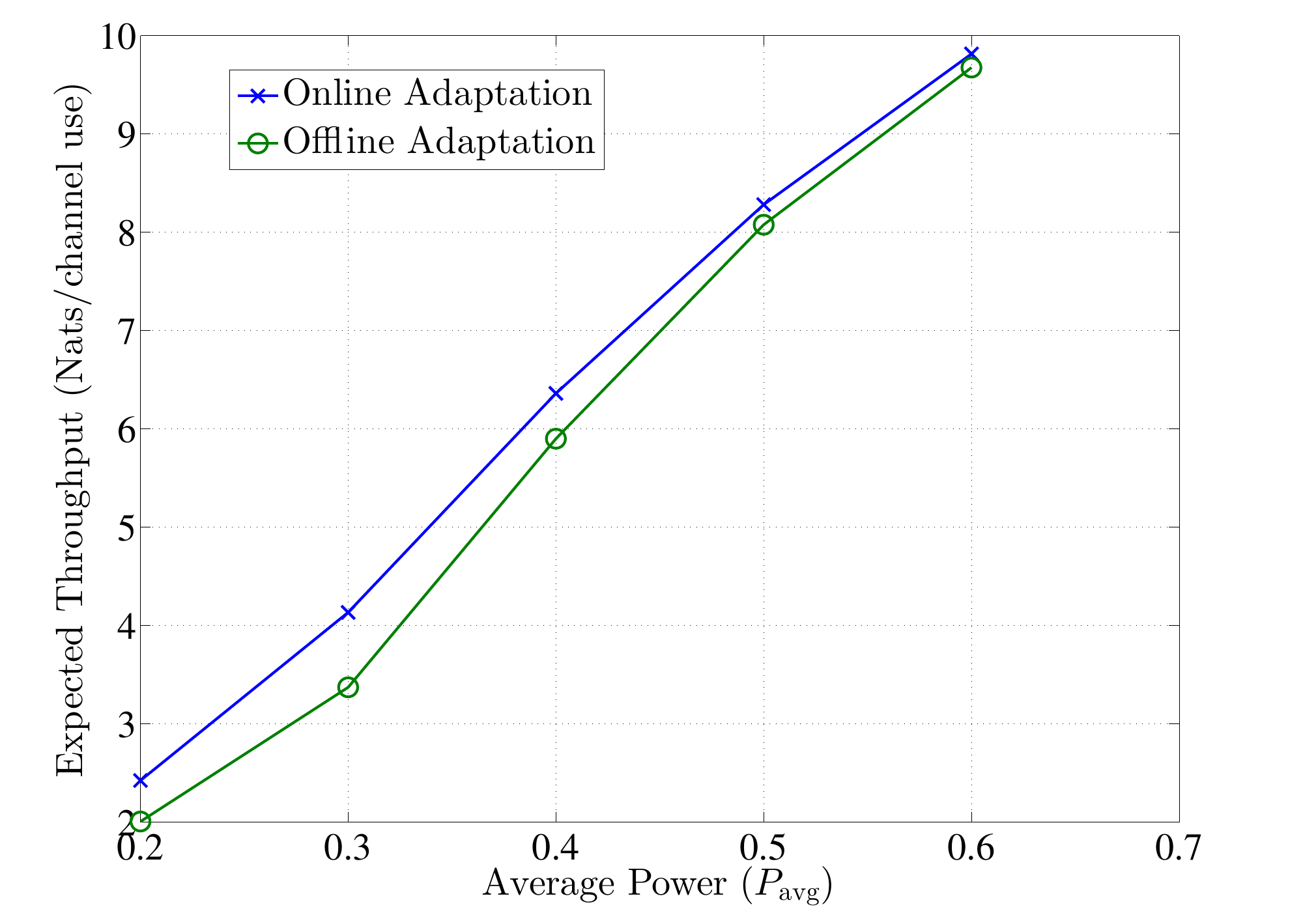}
\caption{The performance of the online adaptation algorithm for the general delay case.}
\label{General_Delay_Overlay_OptPow}
\end{figure}

\section{Conclusion and Discussions}
\label{Conclusion}
In this paper we have formulated the problem of a CR system having a single SU that sequentially tests $M$ potentially available channels, originally licensed to the PU's network, to transmit delay-constrained data. The unique challenge with delay-constrained data is that each packet has a deadline that needs to be transmitted by, on average. Thus there is a trade-off in either taking advantage of a free channel in order to transmit a packet but with low throughput at the current time, or waiting for a future channel that has a considerably higher gain but might be busy. The sequential nature of the problem gave rise to an optimal stopping rule formulation for both overlay and underlay. %In each scheme we studied two possible power control scenarios; the optimal power control and the two-level power control.

In the overlay scheme, the SU was allowed to transmit on free channels only. We have seen that the optimal power control strategy is solved by a modified version of the water-filling algorithm that takes sequential multiple channels into account. Moreover, the solution of the optimal stopping rule was given explicitly via a set of equations obtained in closed-form expressions. These equations, although derived in a single SU scenario, but are shown to be valid in the multi-SU scenario as well.

In the underlay scheme, on the other hand, the SU is allowed to transmit on any channel even if it was busy as long as the average interference to the PU is tolerable. The solution to the underlay problem involved thresholds that were functions of the sensing instantaneous sufficient statistic. We have provided the optimal closed-form expressions for these thresholds and showed how they depend on the distribution of this sufficient statistic, or more precisely, on the probability of the channel being busy, given this sufficient statistic.

We also discussed the extension of our solution to multiple SUs. We showed that the proposed algorithm can apply to multiple SUs system when the number of SUs is sufficiently larger than the number of channels. Moreover, the optimum solution was found to be throughput optimal and delay optimal at the same time. Our algorithm can reach this solution with a smaller complexity relative to the single SU case.

Finally, our formulations for both the overlay and the underlay incorporated the average delay that the SU's packet experience before being transmitted. We showed that when the average delay is constrained in the optimization problem, we could achieve a relatively low packet delay compared to the delay-unconstrained problem. Then we generalized the problem to consider packets arriving simultaneously and having the same deadline to model typical data. A low-complexity online power-and-threshold adaptation solution was proposed and simulation results showed its performance superiority over the offline solution.

While the problem of finding the optimal sensing order of the channels is outside the scope of this work, one could still rely on previous work that addressed this problem. The work in \cite{Sensing_Order_Poor} proposes a dynamic programming algorithm for this problem but without any delay constraints and, moreover, while fixing $P_i(\gamma)=1$ $\forall i \in \script{M}$. Based on the closed-form expressions of the proposed approach for the thresholds and power functions, one could still find the optimum sensing sequence using this dynamic programming algorithm given some fixed $\lambdap$ and $\lambdad$ (as mentioned at the end of Section \ref{Dual_Variables}). Yet finding the optimum $(\lambdapst,\lambdadst)$ is still an open question. This is because the monotonicity of the $S_1^*$ in $\lambdapst$ is not proven when the sensing order is a variable in the problem. Hence the use of the bisection method in Algorithm \ref{Alg_lambda_P_lambda_D} is not guaranteed to be optimal.

\appendices
%\section{Lemma1}
%\begin{appendices}
\section{Proof of Lemma \ref{Lma_Stochastic_Dominance}}
\label{Apdx_Stochastic_Dominance}
\begin{proof}
We carry out the proof by contradiction. Assume, for some $i$, that $\gamma_{\rm th}^*(i)<\lambdapst$. Thus the reward starting from channel $i$, $U_i \left([\gamma_{\rm th}^*(i), \gamma_{\rm th}^*(i+1),...,\gamma_{\rm th}^*(M)]^T,\Pst{i} \right)$, becomes
\begin{align}
\nonumber & \theta_i c_i \int_{\gamma_{\rm th}^*(i)}^{\infty}{\log(1+P_i^*\gamma) f_{\gamma}(\gamma)} \, d\gamma + \theta_i U_{i+1}^* \int_0^{\gamma_{\rm th}^*(i)}{f_{\gamma}(\gamma)} \, d\gamma \\
&\hspace{1in} + (1-\theta_i)U_{i+1}^*
\label{Stoch_Dom_Definition}
%
%\\ =& \theta_i c_i \int_{\lambdapst}^{\infty}\hspace{0.2cm}{\log(1+P_i^*\gamma) f_{\gamma}(\gamma)} \, d\gamma + \theta_i U_{i+1}({\bf \Gamma}^{\rm th *}_{i+1}) \int_0^{\gamma_{\rm th}^*(i)}{f_{\gamma}(\gamma)} \, d\gamma + (1-\theta_i)U_{i+1}({\bf \Gamma}^{\rm th *}_{i+1})
%\label{Stoch_Dom_No_Effect}
%
\\\nonumber \leq & \theta_i c_i \int_{\lambdapst}^{\infty}\hspace{0.2cm}{\log(1+P_i^*\gamma) f_{\gamma}(\gamma)} \, d\gamma + \theta_i U_{i+1}^* \int_0^{\lambdapst}\hspace{0.3cm}{f_{\gamma}(\gamma)} \, d\gamma \\
&\hspace{1in }+ (1-\theta_i)U_{i+1}^*
\label{Stoch_Dom_Inequality}
\\
\label{Stoch_Dom_w_lambda}
=&U_i \left([\lambdapst, \gamma_{\rm th}^*(i+1),...,\gamma_{\rm th}^*(M)]^T,\Pst{i} \right).
\end{align}
%where equality (\ref{Stoch_Dom_Definition}) comes by definition from equation (\ref{Reward}). While the equality in (\ref{Stoch_Dom_No_Effect}) follows since the integration in the first term of (\ref{Stoch_Dom_Definition}) is zero over the interval $[\gamma_{\rm th}^*(i),\lambdapst]$. This is because $P_i^*(\gamma)=0$ $\forall \gamma \in [\gamma_{\rm th}^*(i),\lambdapst]$ (from equation \ref{Water_Filling}). If we add the term $\theta_i U_{i+1}^*\int_{\gamma_{\rm th}^*(i)}^{\lambdapst}{f_{\gamma}(\gamma)} \, d\gamma$, which is a positive term, to the r.h.s of (\ref{Stoch_Dom_No_Effect}) we arrive at inequality (\ref{Stoch_Dom_Inequality}). By definition, the r.h.s. of (\ref{Stoch_Dom_Inequality}) is $U_i \left([\lambdapst, \gamma_{\rm th}^*(i+1),...,\gamma_{\rm th}^*(M)]^T,\Pst{i} \right)$. Thus we have
%REWRITTING THE LAST PARAGRAPH:
Where inequality (\ref{Stoch_Dom_Inequality}) follows by adding the term $\theta_i \lb \int_{\gammathst{i}}^{\lambdapst}{\pdf(\gamma)} \, d\gamma \rb U_{i+1}^*$ to (\ref{Stoch_Dom_Definition}) while (\ref{Stoch_Dom_w_lambda}) follows by the definition of the right-hand-side of (\ref{Stoch_Dom_Inequality}).
%\begin{align}
%\nonumber & \hspace{0.5cm} U_i \left([\gamma_{\rm th}^*(i), \gamma_{\rm th}^*(i+1),...,\gamma_{\rm th}^*(M)]^T,\Pst{i} \right)
%%
%\\ & \leq U_i \left([ \hspace{0.35cm} \lambdapst \hspace{0.35cm}, \gamma_{\rm th}^*(i+1),...,\gamma_{\rm th}^*(M)]^T,\Pst{i} \right).
%\label{Stoch_Dom_w_lambda}
%\end{align}
Using equation (\ref{Reward}), we can calculate the reward $U_{i-1}$ for both the left-hand-side and right-hand-side of the previous inequality. Thus the following inequality holds
\begin{align}
\nonumber &U_{i-1} \left([\gamma_{\rm th}^*(i-1),\gamma_{\rm th}^*(i),...,\gamma_{\rm th}^*(M)]^T,\Pst{i-1} \right) \leq \\
& U_{i-1} \left([\gamma_{\rm th}^*(i-1), \lambdapst,...,\gamma_{\rm th}^*(M)]^T,\Pst{i-1} \right).
\label{Stoch_Dom_Recursion}
\end{align}
%\begin{align}
%\nonumber & \hspace{0.5cm} U_{i-1} \left([\gamma_{\rm th}^*(i-1),\gamma_{\rm th}^*(i),...,\gamma_{\rm th}^*(M)]^T,\Pst{i-1} \right)
%%
%\\ & \leq U_{i-1} \left([\gamma_{\rm th}^*(i-1), \hspace{0.35cm} \lambdapst \hspace{0.35cm},...,\gamma_{\rm th}^*(M)]^T,\Pst{i-1} \right).
%\label{Stoch_Dom_Recursion}
%\end{align}
Carrying out the last step recursively $i-2$ more times, we find the relation
\begin{align}
\nonumber &U_1 \left([\gamma_{\rm th}^*(1),...,\gamma_{\rm th}^*(i-1),\gamma_{\rm th}^*(i),...,\gamma_{\rm th}^*(M)]^T,\Pst{1} \right) \leq \\
& U_1 \left([\gamma_{\rm th}^*(1),...,\gamma_{\rm th}^*(i-1),\lambdapst ,...,\gamma_{\rm th}^*(M)]^T,\Pst{1} \right),
\label{Stoch_Dom_Contradiction}
\end{align}
%\begin{align}
%\nonumber & \hspace{0.5cm} U_1 \left([\gamma_{\rm th}^*(1),...,\gamma_{\rm th}^*(i-1),\gamma_{\rm th}^*(i),...,\gamma_{\rm th}^*(M)]^T,\Pst{1} \right)
%%
%\\ & \leq U_1 \left([\gamma_{\rm th}^*(1),...,\gamma_{\rm th}^*(i-1), \hspace{0.35cm} \lambdapst \hspace{0.35cm},...,\gamma_{\rm th}^*(M)]^T,\Pst{1} \right),
%\label{Stoch_Dom_Contradiction}
%\end{align}
which contradicts with the fact that $\gamma_{\rm th}^*(i)$ is optimal.
\end{proof}

\section{Proof of Theorem \ref{Thm_Unique_Solution_S}}
\begin{proof}
\label{Apdx_Unique_Solution_S}
We first get $S_i^*$, $U_i^*$ and $p_i^*$ by substituting by equations $\gammathst{i}$ and $P_i^*(\gamma)$ in the three equations (\ref{Average_Power}), (\ref{Reward}) and (\ref{Prob_recursive}), respectively. Then we differentiate with respect to $\lambdapst$, treating $\lambdadst$ as a constant, yielding
%\begin{align}
%\label{Average_Power_Star}
%S_i^* & =\theta_i c_i \int_{\gammathst{i}}^\infty{\left( \frac{1}{\lambdapst}-\frac{1}{\gamma} \right) f_{\gamma}(\gamma) \,d\gamma}+ \left[1-\theta_i \bar{F}_\gamma(\gammathst{i}) \right]S_{i+1}^*,\\
%\label{Reward_Opt}
%U_i^* & =\theta_i c_i \int_{\gammathst{i}}^{\infty}{\log \lb \frac{\gamma}{\lambdapst} \rb f_{\gamma}(\gamma)} \, d\gamma + \lb 1-\theta_i \bar{F}_\gamma \lb \gammathst{i} \rb \rb U_{i+1}^*,\\
%\label{Prob_recursive_Opt}
%p_i^* & =\theta_i \bar{F}_\gamma(\gammathst{i})+ \left[1-\theta_i \bar{F}_\gamma(\gammathst{i}) \right] p_{i+1}^*,
%\end{align}
\begin{align}
\label{Deriv_S_i}
\nonumber \dSdlamp{i}=&-\theta_i \pdf(\gammathst{i}) \dgamdlamp{i} \left( c_i \Pist{i} - S_{i+1}^*\right) -\\
& \theta_i c_i \frac{\ccdf(\gammathst{i})}{\lb \lambdapst \rb^2} + \left( 1 - \theta_i \ccdf(\gammathst{i}) \right) \dSdlamp{i+1},\\
\label{Deriv_U_i}
\nonumber \dUdlamp{i}=&-\theta_i \pdf(\gammathst{i}) \dgamdlamp{i} \times \\
\nonumber &\left[ \lambdapst \lb c_i \Pist{i} - S_{i+1}^* \rb - \lambdadst \lb 1 - p_{i+1}^* \rb \right] - \\
 & \theta_i c_i \frac{\ccdf(\gammathst{i})}{\lambdapst}+ \left( 1 - \theta_i \ccdf(\gammathst{i}) \right) \dUdlamp{i+1},\\
\label{Deriv_p_i}
\dpdlamp{i}=&-\theta_i \pdf(\gammathst{i}) \dgamdlamp{i} \left( 1 - p_{i+1}^* \right) + \\
&\left( 1 - \theta_i \ccdf(\gammathst{i}) \right) \dpdlamp{i+1},
\end{align}
respectively. Multiplying equation (\ref{Deriv_S_i}) by $-\lambdapst$ and equation (\ref{Deriv_p_i}) by $\lambdadst$ then adding them to equation (\ref{Deriv_U_i}) we can easily show that, for all $i \in \script{M}$,
%\begin{equation}
%\dUSpdlamp{i}=\left( 1 - \theta_i \ccdf(\gammathst{i}) \right) \left [ \dUSpdlamp{i+1} \right ]
%\label{Deriv_U_S_p_i}
%\end{equation}
%But since $U_{M+1}^*$, $S_{M+1}^*$ and $p_{M+1}^*$ are zeros by definition, then their derivatives with respect to $\lambdapst$ are also zeros. Hence, we arrive at the following result
\begin{equation}
\dUSpdlamp{i}=0.
\label{Deriv_U_S_p_i_eq_zero}
\end{equation}

We now find the derivative of $\gammathst{i}$ with respect to $\lambdapst$ by 
%, but first we know, from Lemma \ref{Lma_Stochastic_Dominance}, that $\gammathst{i} \geq \lambdapst$. Hence, we remove the $\lb \cdot \rb^+$ sign from equation (\ref{gamma_i_Equation}) yielding
%\begin{equation}
%\label{gammaist}
%	c_i \left( \log \left(\frac{\gammathst{i}}{\lambdapst} \right) - \lambdapst \left(\frac{1}{\lambdapst} - \frac{1}{\gammathst{i}} \right) \right)=U_{i+1}^* - \lambdapst S_{i+1}^*-\lambdadst \cdot \left( 1-p_{i+1}^* \right), \hspace{0.25cm} i\in \script{M}.
%\end{equation}
differentiating both sides of equation (\ref{gamma_i_Equation}) with respect to $\lambdapst$, while treating $\lambdadst$ as a constant, then using equation (\ref{Deriv_U_S_p_i_eq_zero}), then rearranging we get
%\begin{equation}
%c_i \lb \frac{1}{\gammathst{i}} \dgamdlamp{i} - \frac{1}{\lambdapst} + \frac{\gammathst{i}- \lambdapst \dgamdlamp{i}}{\lb \gammathst{i} \rb^2} \rb = \dUSpdlamp{i+1} - S_{i+1}^*.
%\label{Deriv_gamma_i}
%\end{equation}
\begin{equation}
\dgamdlamp{i} = \frac{c_i \Pist{i}-S_{i+1}^*}{c_i  \frac{\lambdapst}{\gammathst{i}} \Pist{i}}.
\label{gammai_Deriv_Explicit}
\end{equation}
Substituting by equation (\ref{gammai_Deriv_Explicit}) in (\ref{Deriv_S_i}) we get
\begin{align}
\nonumber \dSdlamp{i}=&- \alpha_i \left[ c_i \Pist{i} - S_{i+1}^*\right]^2 -\theta_i c_i \frac{\ccdf(\gammathst{i})}{\lb \lambdapst \rb^2} + \\
&\left( 1 - \theta_i \ccdf(\gammathst{i}) \right) \dSdlamp{i+1},
\label{Deriv_S_i_recursive}
\end{align}
where $\alpha_i$ is given by
\begin{equation}
\alpha_i=\frac{\theta_i \pdf(\gammathst{i})}{c_i  \frac{\lambdapst}{\gammathst{i}} \Pist{i}} \geq 0,
\label{alpha}
\end{equation}
%and we can easily see that $\alpha_i \geq 0$, for all $i \in \script{M}$, since its numerator and denominator are both $>0$.
Now evaluating (\ref{Deriv_S_i_recursive}) at $i=M$ and $i=M-1$ we get
%\begin{equation}
\begin{align}
\label{Deriv_S_M}
&\dSdlamp{M}=- \alpha_M \left[ c_M \Pist{M} \right]^2 -\theta_M c_M \frac{\ccdf(\gammathst{M})}{\lb \lambdapst \rb^2},\\
%\end{equation}
&{\mbox{and }}\nonumber \dSdlamp{M-1} =- \alpha_{M-1} \left[ c_{M-1} \Pist{M-1} -S_M^*\right]^2 \\
\nonumber &\hspace{0.85in}-\theta_{M-1} c_{M-1} \frac{\ccdf(\gammathst{M-1})}{\lb \lambdapst \rb^2}\\
&\hspace{0.85in}+\left( 1 - \theta_{M-1} \ccdf(\gammathst{M-1}) \right) \dSdlamp{M},
\label{Deriv_S_M__1}
\end{align}
respectively. We can see that $\dSdlamp{M} < 0$, hence $\dSdlamp{M-1} < 0$. By induction, let's assume that $\dSdlamp{i+1} < 0$. From (\ref{Deriv_S_i_recursive}) we get that
\begin{align}
\nonumber \dSdlamp{i} =&- \alpha_i \left( c_i \Pist{i} - S_{i+1}^*\right)^2 -\theta_i c_i \frac{\ccdf(\gammathst{i})}{\lb \lambdapst \rb^2} + \\
&\left( 1 - \theta_i \ccdf(\gammathst{i}) \right) \dSdlamp{i+1}<0
\label{Deriv_S_i_induction}
\end{align}
since all its terms are negative. Finally we find that $\dSdlamp{1}<0$ indicating that $S_1^*$ is monotonically decreasing in $\lambdapst$ given any fixed $\lambdadst \geq 0$.

Now, to get an upper bound on $\lambdapst$, we know that
\begin{equation}
S_i^*=\theta_i c_i \int_{\gammathst{i}}^\infty{ \lb \frac{1}{\lambdapst} - \frac{1}{\gamma} \rb \pdf(\gamma) \,d\gamma}+ \left[1-\theta_i \ccdf(\gammathst{i}) \right]S_{i+1}^*.
\label{Average_Power_Opt}
\end{equation}
We can upper bound the first term in (\ref{Average_Power_Opt}) by $\theta_i c_i / \lambdapst$, while $\left[1-\theta_i \ccdf(\gammathst{i}) \right]<1$. Using these two bounds we can write $S_1^* < \sum_{i=1}^M \theta_i c_i / \lambdapst$. But since $S_1^*=\Pavg$, the upper bound on $\lambdapst$, mentioned in Theorem \ref{Thm_Unique_Solution_S}, follows.
%\begin{align}
%\theta_i c_i \int_{\gammathst{i}}^\infty{\lb\frac{1}{\lambdapst} - \frac{1}{\gamma}\rb \pdf(\gamma) \,d\gamma}
%&\leq \theta_i c_i \int_{\lambdapst}^\infty{\lb \frac{1}{\lambdapst} - \frac{1}{\gamma} \rb \pdf(\gamma) \,d\gamma}\\
%& < \theta_i c_i \int_{\lambdapst}^\infty{\frac{\pdf(\gamma)}{\lambdapst} \,d\gamma}\\
%& < \frac{\theta_i c_i }{\lambdapst}.
%\label{First_Term_UB}
%\end{align}
%Moreover, the term , in (\ref{Average_Power_Opt}) is upper-bounded by $1$. Substituting the latter bound and the bound (\ref{First_Term_UB}) in $S_i^*$, we can write a bound on $S_1^*$ as
%\begin{equation}
%S_1^* \leq \frac{\sum_{i=1}^M \theta_i c_i}{\lambdapst}.
%\label{S1_star_UB}
%\end{equation}
%And since we want to satisfy $S_1^*=\Pavg$, we can write
%\begin{align}
%\Pavg &\leq \frac{\sum_{i=1}^M \theta_i c_i}{\lambdapst}\\
%\lambdapst &\leq \frac{\sum_{i=1}^M \theta_i c_i}{\Pavg}
%\label{lambdapst_UB}
%\end{align}
\end{proof}

\section{Proof of Lemma \ref{Lma_lambdadst_Bound}}
\label{Apdx_Lma_lambdadst_Bound}
\begin{proof}
We provide a proof sketch for this bound. We know that at the optimal point $p_1^*=\invDmax$ and that $p_1^*=\theta_1 \ccdf \lb\gammathst{1}\rb + \lb 1-\theta_1 \ccdf \lb\gammathst{1}\rb \rb p_2^*$. But since the second term in the latter equation is always positive, then
\begin{align}
\theta_1 \ccdf \lb\gammathst{1}\rb < \invDmax.
\label{Prob_Success_Opt_Inequality}
\end{align}
Substituting by \eqref{Gamma_Solution_Lambert_W} in \eqref{Prob_Success_Opt_Inequality} and rearranging we can upper bound $\lambdadst$ by
\begin{equation}
\nonumber\frac{c_1 \lb {\log {\lb \frac{\lambdapst}{\ccdf^{-1}\lb \frac{1}{\theta_1 \Dmax}\rb}\rb} - {\frac{\lambdapst}{\ccdf^{-1}\lb \frac{1}{\theta_1 \Dmax}\rb}} +1} \rb + U_2^*-\lambdapst S_2^*}{1-p_2^*}
\label{LambdaD_Bound_1}
\end{equation}
We can easily upper bound $\log {\lb \lambdapst/\ccdf^{-1}\lb 1/ \lb\theta_1 \Dmax \rb \rb\rb} - {\lambdapst/\ccdf^{-1}\lb 1/\lb\theta_1 \Dmax \rb\rb}$ by substituting $\lambdapmax$ for $\lambdapst$ when $\lambdapst<\ccdf^{-1}\lb 1/\lb\theta_1 \Dmax \rb\rb$ and by $1$ otherwise. Moreover, it can also be shown that $U_2^*<\Utwomax$, $p_2^*<\ptwomax$ and that $\lambdapst S_2^*>0$ and from Theorem \ref{Thm_Unique_Solution_S} we have $\lambdapst<\lambdapmax$, the proof then follows.
\end{proof}

\bibliographystyle{IEEEbib}
\bibliography{MyLib}
\begin{IEEEbiography}[{\includegraphics[width=1in,height=1.25in,clip,keepaspectratio]{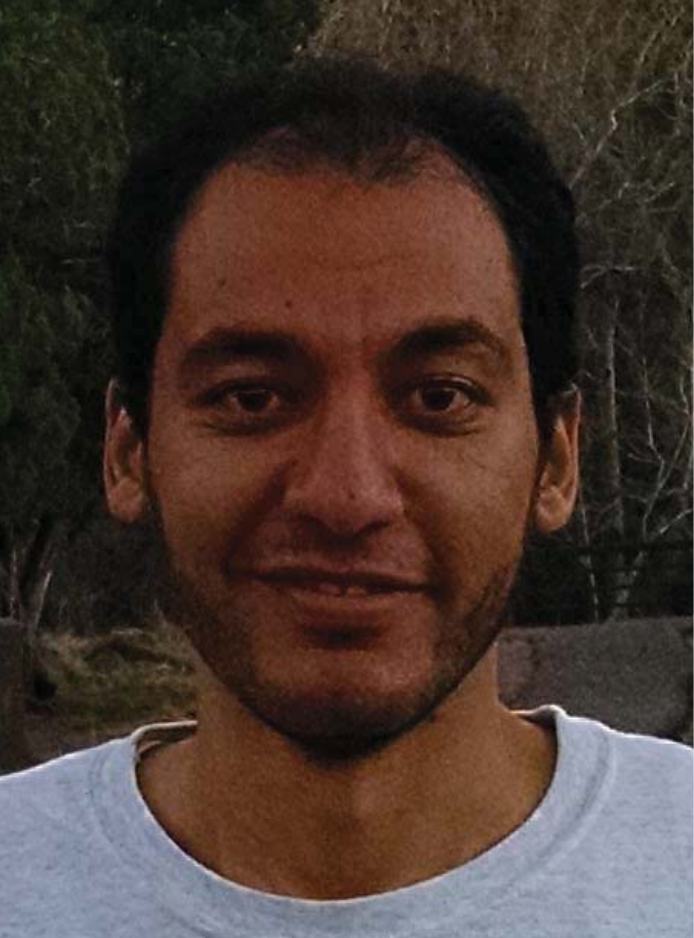}}]{Ahmed E. Ewaisha} was born in Cairo, Egypt in 1987. He received his B.S. degree with honors in electrical engineering ranking top 5\% on his class at Alexandria University in 2009. Consequently, he was admitted to Nile University that is considered the first research-based university in Egypt where he received his M.S. degree in 2011 in electrical engineering.

In fall 2011, he joined the Ira A. Fulton School of engineering at Arizona State University, Tempe, where he is now working towards his PhD degree studying the delay analysis in cognitive radio networks. His research interests span a wide area of wireless as well as wired communication networks including stochastic optimization, power allocation, cognitive radio networks, resource allocation and quality-of-service guarantees in data networks.
\end{IEEEbiography}

\begin{IEEEbiography}[{\includegraphics[width=1in,height=1.25in,clip,keepaspectratio]{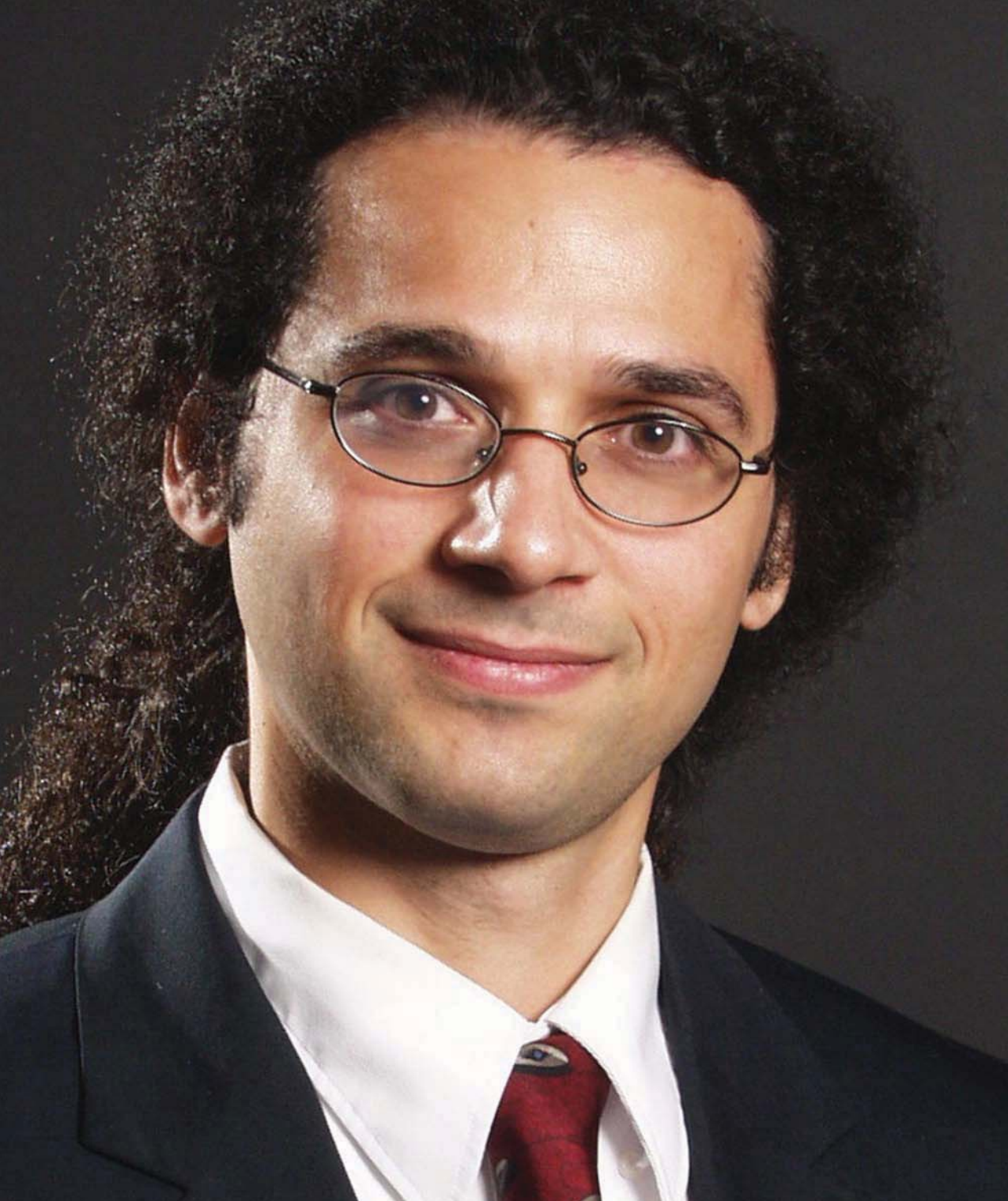}}]{Cihan Tepedelenlio\u{g}lu} (S'97-M'01) was born in Ankara, Turkey in 1973. He received his B.S. degree with highest honors from Florida Institute of Technology in 1995, and his M.S. degree from the University of Virginia in 1998, both in electrical engineering. From January 1999 to May 2001 he was a research assistant at the University of Minnesota, where he completed his Ph.D.  degree in electrical and computer engineering. He is currently an Associate Professor of Electrical Engineering at Arizona State University.

Prof. Tepedelenlio\u{g}lu was awarded the NSF (early) Career grant in 2001, and has served as an Associate Editor for several IEEE Transactions including IEEE Transactions on Communications, IEEE Signal Processing Letters, and IEEE Transactions on Vehicular Technology. His research interests include statistical signal processing, system identification, wireless communications, estimation and equalization algorithms for wireless systems, multi-antenna communications, OFDM, ultra-wideband systems, distributed detection and estimation, and data mining for PV systems.
\end{IEEEbiography}

\end{document}